\providecommand{\U}[1]{\protect\rule{.1in}{.1in}}
\newtheorem{theorem}{Theorem}
\newtheorem{acknowledgement}[theorem]{Acknowledgement}
\newtheorem{lemma}[theorem]{Lemma}
\newtheorem{notation}[theorem]{Notation}
\newtheorem{proposition}[theorem]{Proposition}
\newenvironment{proof}[1][Proof]{\noindent\textbf{#1.} }{\ \rule{0.5em}{0.5em}}
\begin{document}

\title{Geometric Representation of Generalized Coherent States and their Symplectic
Capacities: A Synthetic Approach}
\author{Maurice de Gosson\\Austrian Academy of Sciences\\Acoustics Research Institute\\1010, Vienna, AUSTRIA\\and\\University of Vienna\\Faculty of Mathematics (NuHAG)\\1090 Vienna, AUSTRIA}
\maketitle

\begin{abstract}
In this work we review, complete, and synthesize results linking generalized
coherent stages (nondegradable Gaussian wavefunctions) to the notions of Fermi
ellipsoids, quantum blobs, and microlocal pairs introduced in previous work.
These geometric objects are Fermi ellipsoids, quantum blobs, and microlocal
pairs. In addition we study various symplectic capacities associated with
these objects.

\end{abstract}

\section{Introduction}

The notion of coherent states was introduced by Erwin Schr\"{o}dinger
\cite{Schr1} as minimum uncertainty wavepackets of the form
\begin{equation}
\psi(x)=A\exp\left(  -\frac{(x-x_{0})^{2}}{2\sigma^{2}}+i\frac{p_{0}x}{\hbar
}\right)  . \label{coh1}%
\end{equation}
What Schr\"{o}dinger had in mind was to study the states of the quantum
harmonic oscillator which minimize the Heisenberg uncertainty relation $\Delta
x\cdot\Delta p\geq\frac{1}{2}\hbar\ $ The notion of coherent states has been
since then considerable extended by and studied many several authors,\textit{
e.g}. \cite{1,2,3}. In the present paper we will consider generalized
non-degenerate Gaussians of the type
\begin{equation}
\psi_{X,Y}(x)=\left(  \tfrac{\det X}{(\pi\hbar)^{n}}\right)  ^{1/4}%
e^{-\frac{1}{2\hbar}(X+iY)x\cdot x} \label{coh2}%
\end{equation}
where $X,Y\in\operatorname*{Sym}(n,\mathbb{R})$ ,and $X$ is positive definite
($X>0$) as well as there extension
\begin{equation}
\psi_{X,Y}^{x_{0},p_{0}}(x)=\left(  \tfrac{\det X}{(\pi\hbar)^{n}}\right)
^{1/4}e^{i\frac{p_{0}x}{\hbar}}e^{-\frac{1}{2\hbar}(X+iY)(x-x_{0}%
)\cdot(x-x_{0})}. \label{coh3}%
\end{equation}
We will call such states "generalized coherent states". Functions of this type
are widely in quantum optics (where they have become an industry), in quantum
mechanics, and also time-frequency analysis where they are omnipresent in the
theory of Gabor frames.

We are going to show that generalized Gaussianity wavepackets (or states) can
be represented in three different ways by geometrical constructs:

\begin{itemize}
\item \textit{Fermi ellipsoids}, which are the interior of the level sets
determined by the Weyl symbol stationary second-order partial defensible
equitation
\[
\left[  (-i\hbar\nabla_{x}+Yx)^{2}+X^{2}x\cdot x\right]  \psi_{X,Y}%
=(\hbar\operatorname*{Tr}X)\psi_{X,Y}%
\]
satisfied by $\psi_{X,Y}$; it is the Fermi ellipsoid is thus the ellipsoid
defined by the inequality
\[
(p+Yx)^{2}+X^{2}x\cdot x\leq\hbar\operatorname*{Tr}X.
\]
This construction is easily extended to the case of translated Gaussians
(\ref{coh3});

\item \textit{Quantum blobs}, which are symplectic balls with radius
$\sqrt{\hbar}$; they are minimum uncertainty phase space cells and are
represented as the Wigner covariance ellipsoid of $\psi_{X,Y}$. The
identification of quantum blobs with generalized coherent states is related to
the uncertainty principle via its conventional formulation in terms of
variances and covariances, and is thus statistical in nature.. In this sense
this identification is not intrinsic, as is that proposed by Fermi ellipsoids;

\item \textit{Microlocal pairs}; we (introduced in \cite{Poiuntilliswme}): we
are using Fefferman's terminology \cite{Fefferman} to qualify Cartesian
products $X_{\ell}\times X_{\ell^{\prime}}^{\hbar}$ where $X_{\ell}$ is an
ellipsoid carried by a Lagrangian plane $\ell$ in phase space and
$X_{\ell^{\prime}}^{\hbar}$ is the $\hbar$-polar dual of $X_{\ell}$ with
respect to a transverse Lagrangian plane $\ell^{\prime}$ . It turns out that
the John ellipsoid of a microlocal pair is always a quantum blob.
\end{itemize}

Denoting by $\operatorname{Fermi}(n)$, $\operatorname*{Blob}(n)$,
$\operatorname{Micro}(n)$ the sets described above and by
$\operatorname*{Gauss}(n)$ the of all generalized coherent states (\ref{coh3})
we will prove the existence of three bijections
\begin{align*}
\Phi_{\operatorname{Fermi}}  &  :\operatorname{Fermi}(n)\longrightarrow
\operatorname*{Gauss}(n)\\
\Phi_{\mathrm{blob}}  &  :\operatorname*{Blob}(n)\longrightarrow
\operatorname*{Gauss}(n)\\
\Phi_{\mathrm{micro}}  &  :\operatorname{Micro}(n)\longrightarrow
\operatorname*{Gauss}(n)
\end{align*}
These bijections thus provide three different, but equivalent, geometric
representations of generalized coherent states. Being geometric objects in
phase space it makes sense to study their topological properties. This will be
achieved by calculating the symplectic capacities of the objects in
$\operatorname{Fermi}(n)$, $\operatorname*{Blob}(n)$, and
$\operatorname{Micro}(n$

The basic properties of the symplectic group and its metaplectic
representation are given in Appendix A; the main definitions and properties of
symplectic capacities are given in Appendix B.

\begin{notation}
The points in configuration and momentum space are written $x=(x_{1}%
,...,x_{n})$ and $p=(p_{1},...,p_{n})$ respectively; in formulas $x$ an $p$
are viewed as column vectors. We will also use the collective notation
$z=(x,p)$ for the phase space variable.
\end{notation}

\section{Generalized coherent states}

We quickly review the mathematical objects which are the main theme of this paper.

\subsection{Description and notation}

The generalized Gaussians of the type
\begin{equation}
\psi_{X,Y}(x)=\left(  \tfrac{\det X}{(\pi\hbar)^{n}}\right)  ^{1/4}%
e^{-\frac{1}{2\hbar}(X+iY)x\cdot x} \label{squeezed}%
\end{equation}
($X,Y\in\operatorname*{Sym}(n,\mathbb{R})$ and$X>0$) can be obtained from the
standard coherent state
\begin{equation}
\phi_{0}(x)=\psi_{I,0}(x)=(\pi\hbar)^{-n/4}e^{-|x|^{2}/2\hbar}
\label{standard}%
\end{equation}
using the elementary unitary operators, which belong to the metaplectic group
$\operatorname*{Mp}(n)$ (see Appendix A),
\begin{equation}
\widehat{V}_{P}\psi(x)=e^{--\frac{i}{2\hbar}Px\cdot x}\psi(x)\text{ \ ,
\ }\widehat{M}_{L}\psi(x)=\sqrt{\det L}\psi(XL) \label{unit}%
\end{equation}
as follows from the obvious formula
\begin{equation}
\psi_{X,Y}=\widehat{V}_{Y}\widehat{M}_{X^{1/2}}\phi_{0}. \label{fixy}%
\end{equation}

More generally, we define the displaced Gaussians
\begin{equation}
\psi_{X,Y}^{z_{0}}=\widehat{T}(z_{0})\psi_{X,Y}\text{ \ \ },\text{ \ \ }%
z_{0}=(x_{0},p_{0}) \label{psitt}%
\end{equation}
where $\widehat{T}(z_{0})$ is the Heisenberg--Weyl operator%
\begin{equation}
\widehat{T}(z_{0})\psi(x)=e^{\frac{i}{\hbar}(p_{0}\cdot x-\frac{1}{2}%
p_{0}\cdot x_{0})}\psi(x-x_{0})) \label{HW}%
\end{equation}
(note that this definition slightly differs from the one given in formula
(\ref{coh2}) in the introduction; the one given here is more flexible because
of its covariance properties as we will see).

We denote by $\operatorname*{Gauss}(n)$ the set of all equivalence classes
$|\psi_{X,Y}^{z_{0}}\rangle$ \ of Gaussian states\ for the equivalence
relation
\[
\psi_{X,Y}^{z_{0}}\sim\psi_{X^{\prime},Y^{\prime}}^{z_{0}^{\prime}%
}\Leftrightarrow\psi_{X^{\prime},Y^{\prime}}^{z_{0}^{\prime}}=c\psi
_{X,Y}^{z_{0}}\text{ \ , \ }|c|=1.
\]
The subset of $\operatorname*{Gauss}(n)$ consisting of all centered Gaussians
($z_{0}=0$). is denoted by by $\operatorname*{Gauss}_{0}(n)$.

\begin{proposition}
The metaplectic group $\operatorname*{Mp}(n)$ acts transitively on
$\operatorname*{Gauss}(n)$: for every pair $|\psi_{X,Y}^{z_{0}}\rangle$,
$|\psi_{X^{\prime},Y^{\prime}}^{z_{0}^{\prime}}\rangle$ in
$\operatorname*{Gauss}_{0}(n)$ there exists $\hat{S}\in\operatorname*{Mp}(n)$
such that $\hat{S}|\psi_{X^{\prime},Y^{\prime}}^{z_{0}^{\prime}}\rangle
=|\psi_{X,Y}^{z_{0}}\rangle$.
\end{proposition}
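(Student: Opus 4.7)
My plan is to reduce everything to the standard coherent state $\phi_0 = \psi_{I,0}$ using the factorization (\ref{fixy}), and then exploit the covariance of the Heisenberg--Weyl operators under the metaplectic action in order to handle the displacement.

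First I would treat the centered case $z_0 = z_0' = 0$. Formula (\ref{fixy}) already exhibits every centered generalized coherent state as a metaplectic image of $\phi_0$: setting $\hat{S}_{X,Y} := \widehat{V}_Y \widehat{M}_{X^{1/2}} \in \operatorname{Mp}(n)$ one has $\psi_{X,Y} = \hat{S}_{X,Y}\phi_0$ (up to a global phase, which is harmless at the level of equivalence classes). Consequently, given two centered states $|\psi_{X,Y}\rangle$ and $|\psi_{X',Y'}\rangle$, the element $\hat{S} := \hat{S}_{X,Y}\,\hat{S}_{X',Y'}^{-1} \in \operatorname{Mp}(n)$ maps one to the other. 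Transitivity on $\operatorname{Gauss}_0(n)$ then follows.

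To pass to $\operatorname{Gauss}(n)$, I would invoke the standard covariance identity (reviewed in Appendix A)
\[
\hat{S}\,\widehat{T}(z_0)\,\hat{S}^{-1} = \widehat{T}(Sz_0)
\]
where $S \in \operatorname{Sp}(n)$ is the projection of $\hat{S} \in \operatorname{Mp}(n)$. Given two arbitrary displaced Gaussians $|\psi_{X,Y}^{z_0}\rangle$ and $|\psi_{X',Y'}^{z_0'}\rangle$, pick $\hat{S} \in \operatorname{Mp}(n)$ with $\hat{S}\psi_{X',Y'} = \psi_{X,Y}$ from the centered case, and let $w_0 = z_0 - Sz_0'$. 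Then one can compose $\hat{S}$ with a Heisenberg--Weyl operator so that, combining (\ref{psitt}) with the covariance identity, the composite sends $\psi_{X',Y'}^{z_0'}$ to $\psi_{X,Y}^{z_0}$ up to a phase. Here I am implicitly using the convention (adopted in this paper, as signalled by the inclusion of $\widehat{T}(z_0)$ in the metaplectic framework) that $\operatorname{Mp}(n)$ is taken to include the Heisenberg--Weyl translations, i.e.\ one really works with the inhomogeneous metaplectic group.

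The main obstacle, and the only non-routine point, is bookkeeping the global phases: $\widehat{V}_Y$, $\widehat{M}_{X^{1/2}}$, and $\widehat{T}(z_0)$ each carry sign or unimodular-phase ambiguities, and the Heisenberg--Weyl operators fail to commute in a controlled (cocycle) way. Passing to equivalence classes $|\cdot\rangle$ absorbs these phases, so the argument is clean at that level; if one wanted an honest equality of vectors one would have to track the metaplectic sign and the Weyl cocycle explicitly. I would therefore carry out the whole argument in $\operatorname{Gauss}(n)$ rather than on individual wavefunctions.
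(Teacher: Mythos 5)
Your proof is correct and takes essentially the same route as the paper: the centered case is exactly the paper's one-line argument from formula (\ref{fixy}), namely composing $\widehat{V}_{Y}\widehat{M}_{X^{1/2}}$ with the inverse of $\widehat{V}_{Y'}\widehat{M}_{X'^{1/2}}$ and absorbing phases by working with equivalence classes. Your handling of the displaced case via the covariance of $\widehat{T}(z_{0})$ under conjugation and the passage to the inhomogeneous (extended) metaplectic group is precisely the extension the paper explicitly leaves to the reader in the remark following the proposition, and you are right that plain $\operatorname{Mp}(n)$ alone would not suffice there.
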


\begin{proof}
It immediately follows from formula \ (\ref{fixy}).
\end{proof}

We leave it to the reader to extend the result above to the case of
$\operatorname*{Gauss}(n)$ by using the extended metaplectic group obtained by
forming the semi direct product of $\operatorname*{Mp}(n)$ with the
Heisenberg--Weyl operators \cite{Folland,Birk,Birkbis}.

\subsection{The Wigner function of $\psi_{X,Y}$}

The Wigner transform
\begin{equation}
W\psi_{X,Y}(z)=\left(  \frac{1}{2\pi\hbar}\right)  ^{n}\int_{\mathbb{R}^{n}%
}e^{-\frac{i}{\hbar}p\cdot y}\psi_{X,Y}(x+\tfrac{1}{2}y)\psi_{X,Y}^{\ast
}(x-\tfrac{1}{2}y)dy \label{oupsi}%
\end{equation}
of the Gaussian state $\psi_{X,Y}$ is itself a Gaussian, namely%
\begin{equation}
W\psi_{X,Y}(z)=\left(  \frac{1}{\pi\hbar}\right)  ^{n}\exp\left(  -\frac
{1}{\hbar}Gz\cdot z\right)  \label{goupsi}%
\end{equation}
where $G$ is the matrix%
\begin{equation}
G=%
\begin{pmatrix}
X+YX^{-1}Y & YX^{-1}\\
X^{-1}Y & X^{-1}%
\end{pmatrix}
\label{G}%
\end{equation}
(see e.g. \cite{Birk,Littlejohn}). It is a positive definite symplectic
matrix, for we can write
\begin{equation}
G=S^{T}S\text{ \ , \ }S=%
\begin{pmatrix}
X^{1/2} & 0\\
X^{-1/2}Y & X^{-1/2}%
\end{pmatrix}
\label{ess}%
\end{equation}
and $S$ is trivially symplectic. Setting $G=\frac{\hbar}{2}\Sigma^{-1}$ and
using the fact that $\det G=1$ we can rewrite (\ref{goupsi}) as%
\begin{equation}
W\psi_{X,Y}(z)=\left(  \frac{1}{2\pi}\right)  ^{n}\det\Sigma^{-1/2}\exp\left(
-\frac{1}{2}\Sigma^{-1}z\cdot z\right)
\end{equation}
so $\Sigma^{-1}$ is the covariance matrix of the phase space Gaussian
distribution $W\psi_{X,Y}$. Considering the corresponding covariance ellipsoid
(summertimes called the Wigner ellipsoid \cite{Littlejohn})%

\begin{equation}
\Omega_{\Sigma}=\{z:\frac{1}{2}\Sigma^{1}z\cdot z\leq1\}=\{z:\frac{1}%
{2}Gz\cdot z\leq\hbar\} \label{cowig}%
\end{equation}
we see that $\Omega_{\Sigma}$ is a symplectic ball with radius$\sqrt{\hbar}$:%
\begin{equation}
\Omega_{\Sigma}=S^{-1}(B^{2n}(\sqrt{\hbar}) \label{omegablob}%
\end{equation}
that is a \textit{quantum blob} \textbf{\cite{blob}. }Notice that
\begin{equation}
S^{-1}=%
\begin{pmatrix}
X^{-1/2} & 0\\
-X^{-1/2}Y & X^{1/2}%
\end{pmatrix}
=M_{X^{1/2}}V_{Y}. \label{smin1}%
\end{equation}

Similarly, using the covariance property relation
\[
W(\widehat{T}(z_{0})\psi)=W\psi(z-z_{0})
\]
Wigner transform and Heisenberg--Weyl operators we have
\[
W\psi_{X,Y}^{z_{0}}(z)=\left(  \frac{1}{\pi\hbar}\right)  ^{n}\exp\left(
-\frac{1}{\hbar}G(z-z_{0})\cdot(z-z_{0})\right)
\]
where the matrix $G$ is defined as above. The extension of the discussion
above to the translated case is straightforward.

\section{The Fermi Representation of $\psi_{X,Y}$}

\subsection{The stationary equation satisfied by $\psi_{X,Y}$}

Enrico Fermi in a largely forgotten paper \cite{Fermi} from 1930. Fermi
associates to every quantum state $\psi$ a certain hypersurface $g_{\mathrm{F}%
}(x,p)=0$ in phase space. Consider a complex twice continuously differentiable
function $\psi(x)=R(x)e^{i\phi(x)/\hslash}$ ($R(x)\geq0$ and $\phi(x)$ real)
defined on configuration space $\mathbb{R}^{n}$. At every $x$ where
$R(x)\neq0$\ the function $R$ trivially satisfies the identity
\begin{equation}
\left(  -\hbar^{2}\nabla_{x}^{2}+\hbar^{2}\frac{\nabla_{x}^{2}R(x)}%
{R(x)}\right)  R(x)=0. \label{fermi1}%
\end{equation}
Performing the gauge transformation $-i\hbar\nabla_{x}\longrightarrow
-i\hbar\nabla_{x}-\nabla_{x}\phi$ this identity becomes
\begin{equation}
\left(  -i\hbar\nabla_{x}-\nabla_{x}\phi(x)\right)  ^{2}-Q(x))R(x)=0
\label{16}%
\end{equation}
where $Q$ is the real function%
\begin{equation}
Q(x)=--\hbar^{2}\frac{\nabla_{x}^{2}R(x\mathbf{)}}{R\mathbf{(}x\mathbf{)}}%
\end{equation}
(the reader familiar with Bohm's approach to quantum mechanics will note that
$Q(x)$ can be identified with Bohm's quantum potential; up to a factor the
idea has been developed in \cite{DeGoHi}). We will call
\begin{equation}
\left(  \left(  -i\hbar\nabla_{x}-\nabla_{x}\phi\right)  ^{2}+\hbar^{2}%
\frac{\nabla_{x}^{2}R}{R}\right)  \psi_{X,Y}=0 \label{fermiop}%
\end{equation}
(the procedure outlined above goes back to Fermi \cite{Fermi}; also see
\cite{best}).

Let us apply this to the Gaussian $\psi_{X,Y}$.

\begin{proposition}
The generalized coherent state $\psi_{X,Y}$ satisfies the eigenvalue equation%
\begin{equation}
\widehat{H}_{XY}\psi_{X,Y}=(\hbar\operatorname*{Tr}X)\psi_{X,Y} \label{stat}%
\end{equation}
where $\widehat{H}_{\mathrm{F}}$ is the partial differential operator
\begin{equation}
\widehat{H}_{XY}=(-i\hbar\nabla_{x}+Yx)^{2}+X^{2}x\cdot x. \label{pdo}%
\end{equation}

\end{proposition}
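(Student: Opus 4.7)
The plan is to apply the Fermi-type identity (\ref{fermiop}) derived earlier in the section directly to $\psi_{X,Y}$; once the polar decomposition $\psi_{X,Y}=R\,e^{i\phi/\hbar}$ is made explicit, everything reduces to two straightforward derivative computations.

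First I would read off the amplitude and phase. Since $X,Y$ are symmetric and $X>0$, we can write $\psi_{X,Y}(x)=R(x)e^{i\phi(x)/\hbar}$ with
\[
R(x)=\left(\tfrac{\det X}{(\pi\hbar)^{n}}\right)^{1/4}e^{-\frac{1}{2\hbar}Xx\cdot x}\qquad\text{and}\qquad \phi(x)=-\tfrac{1}{2}Yx\cdot x.
\]
Then $\nabla_{x}\phi(x)=-Yx$, so the gauge-shifted momentum operator in (\ref{fermiop}) becomes exactly
\[
-i\hbar\nabla_{x}-\nabla_{x}\phi=-i\hbar\nabla_{x}+Yx,
\]
which matches the first summand in the definition (\ref{pdo}) of $\widehat{H}_{XY}$.

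Next I would compute the quantum potential term $Q(x)=-\hbar^{2}\nabla_{x}^{2}R/R$. Differentiating the Gaussian $R$ gives $\nabla_{x}R=-\tfrac{1}{\hbar}(Xx)R$, and a second differentiation yields
\[
\nabla_{x}^{2}R=\left(-\tfrac{1}{\hbar}\operatorname{Tr}X+\tfrac{1}{\hbar^{2}}X^{2}x\cdot x\right)R,
\]
using that $X$ is symmetric so $\nabla_{x}\!\cdot\!(Xx)=\operatorname{Tr}X$ and $(Xx)\cdot(Xx)=X^{2}x\cdot x$. Consequently
\[
\hbar^{2}\frac{\nabla_{x}^{2}R}{R}=-\hbar\operatorname{Tr}X+X^{2}x\cdot x.
\]

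Finally I would substitute these two identities into the Fermi identity (\ref{fermiop}), which asserts
\[
\left[(-i\hbar\nabla_{x}-\nabla_{x}\phi)^{2}+\hbar^{2}\tfrac{\nabla_{x}^{2}R}{R}\right]\psi_{X,Y}=0.
\]
The substitution produces $\bigl[(-i\hbar\nabla_{x}+Yx)^{2}+X^{2}x\cdot x-\hbar\operatorname{Tr}X\bigr]\psi_{X,Y}=0$, which is precisely (\ref{stat}) after moving the scalar term to the right-hand side. The only step that really requires care is the matrix-valued computation of $\nabla_{x}^{2}R$, where one must keep track of the contribution from differentiating the linear factor $Xx$ (giving $\operatorname{Tr}X$) separately from the quadratic contribution $|Xx|^{2}/\hbar^{2}$; everything else is a direct substitution into the general Fermi identity already established for arbitrary $R\,e^{i\phi/\hbar}$.
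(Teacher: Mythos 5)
Your proposal is correct and follows exactly the paper's own argument: the same polar decomposition with $\phi(x)=-\tfrac{1}{2}Yx\cdot x$ and $R$ the Gaussian amplitude, the same two identities $\nabla_{x}\phi=-Yx$ and $\nabla_{x}^{2}R/R=-\tfrac{1}{\hbar}\operatorname{Tr}X+\tfrac{1}{\hbar^{2}}X^{2}x\cdot x$, and the same substitution into the general Fermi identity (\ref{fermiop}). The only difference is that you spell out the second-derivative computation that the paper states without detail, which is a welcome addition but not a different route.
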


\begin{proof}
Setting $\phi(x)=-\frac{1}{2}Yx\cdot x$ and $R(x)=\exp\left(  -\frac{1}%
{2\hbar}Xx\cdot x\right)  $ we have
\begin{equation}
\nabla_{x}\phi(x)=-Yx\text{ \ , \ }\frac{\nabla_{x}^{2}R(x)}{R(x)}=-\frac
{1}{\hbar}\operatorname*{Tr}X+\frac{1}{\hbar^{2}}X^{2}x\cdot x \label{tr}%
\end{equation}
hence, taking (\ref{fermio}) into account, $\psi_{X,Y}$ is a solution of
(\ref{stat}).
\end{proof}

We call $\widehat{H}_{XY}$ the Fermi operator of $\psi_{X,Y}$.

Choosing for $\psi_{X,Y}$ the standard coherent state $\phi_{0}(x)=(\pi
\hbar)^{-n/4}e^{-|x|/2\hbar}$ $we$ have $X=I$ and $Y=0$ hence the equation
(\ref{stat}) reduces, dividing by $2$, to the familiar harmonic oscillator
eigenvalue equation%
\[
\frac{1}{2}(-\hbar^{2}\nabla_{x}^{2}+X^{2})\phi_{0}=\frac{1}{2}n\hbar\phi
_{0}.
\]

\subsection{The Fermi ellipsoid and generalized coherent states}

The Fermi operator$\widehat{H}_{\mathrm{F}}$ is the quantized version (in any
reasonable quantization scheme) of the function
\begin{equation}
H_{XY}(x,p)=(p+Yx)^{2}+X^{2}x\cdot x-\hbar\operatorname*{Tr}X \label{gf3}%
\end{equation}
(it is the Weyl symbol of $\widehat{H}_{\mathrm{F}}$). We can rewrite this
formula as
\[
H_{XY}(x,p)=M_{XY}z\cdot z-\hbar\operatorname*{Tr}X
\]
($z=(x,p)$) where $M_{XY}$ is the symmetric matrix
\begin{equation}
M_{XY}=%
\begin{pmatrix}
X^{2}+Y^{2} & Y\\
Y & I
\end{pmatrix}
. \label{mf}%
\end{equation}
A straightforward calculation shows that this matrix admits the factorization%
\begin{equation}
M_{XY}=S^{T}D_{X}S\text{ \ , \ }D_{X}=%
\begin{pmatrix}
X & 0\\
0 & X
\end{pmatrix}
\label{mfs}%
\end{equation}
where $S$ is the symplectic matrix in (\ref{ess}) \
\begin{equation}
S=%
\begin{pmatrix}
X^{1/2} & 0\\
X^{-1/2}Y & X^{-1/2}%
\end{pmatrix}
. \label{sxy}%
\end{equation}

It turns out --and this is really a striking fact!-- that $M_{XY}$ is closely
related to the Wigner transform (\ref{oupsi})--(\ref{goupsi}) of $\psi_{X,Y}$.
In fact%
\begin{equation}
W\psi_{X,Y}(z)=\left(  \frac{1}{\pi\hbar}\right)  ^{n}e^{-\operatorname*{Tr}%
X}\exp\left(  -\frac{1}{\hbar}H_{XY}(S^{-1}D^{-1/2}Sz)\right)  \label{wgf}%
\end{equation}
with $D=%
\begin{pmatrix}
X & 0\\
0 & X
\end{pmatrix}
$. In particular, when $n=1$ and $\psi_{X,Y}(x)=\psi_{0}(x)$ the standard
coherent state we have%
\[
W\psi_{0}(z)=\left(  \frac{1}{\pi\hbar}\right)  ^{1/4}\frac{1}{e}\exp\left(
-\frac{1}{\hbar}M_{XY}z\cdot z\right)
\]
which was already observed in \cite{best}.

We will, as Fermi did, identify the state $\psi_{X,Y}$ with the phase space
ellipsoid defined by $H_{XY}(x,p)\leq0,$ that is:%
\begin{equation}
\Omega_{XY}=\{z:M_{XY}z\cdot z\leq\hbar\operatorname*{Tr}X\}. \label{Fermi1}%
\end{equation}
We will call $\Omega_{XY}$ the \textit{Fermi ellipsoid} of the generalized
coherent state $\psi_{X,Y})$.

We denote by $\operatorname{Fermi}(n)$ the set of all Fermi ellipsoids%
\begin{equation}
\Omega_{XY}^{z_{0}}=\{z:M_{XY}(z-z_{0})\cdot(z-z_{0})\leq\hbar
\operatorname*{Tr}X\}. \label{Fermi2}%
\end{equation}
and by $\operatorname{Fermi}_{0}(n)$ the subset consisting of all centered
Fermi ellipsoids (\ref{Fermi1}).

\begin{proposition}
The mapping $\Phi_{\operatorname{Fermi}}:\operatorname{Fermi}_{0}%
(n)\longrightarrow\operatorname*{Gauss}_{0}(n)$ defined by%
\[
\Phi_{\operatorname{Fermi}}:\Omega_{XY}\longmapsto\psi_{X,Y}%
\]
is a bijection.
\end{proposition}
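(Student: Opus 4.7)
The plan is to show that both $\operatorname{Fermi}_{0}(n)$ and $\operatorname*{Gauss}_{0}(n)$ are naturally parametrized by the same set of pairs
\[
\mathcal{P}=\{(X,Y):X,Y\in\operatorname*{Sym}(n,\mathbb{R}),\,X>0\},
\]
and that $\Phi_{\operatorname{Fermi}}$ is precisely the identification read off these parameters. The task then reduces to two injectivity claims: that $(X,Y)\mapsto\Omega_{XY}$ is one-to-one on $\mathcal{P}$, and that the equivalence class $|\psi_{X,Y}\rangle$ also determines $(X,Y)$. Surjectivity onto either side is built into the respective definitions (\ref{Fermi1}) and of $\operatorname*{Gauss}_{0}(n)$.

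For the first injectivity, I would argue as follows. An ellipsoid of the form $\{z:Mz\cdot z\leq c\}$ with $M>0$ and $c>0$ determines the matrix $M/c$ uniquely; so if $\Omega_{XY}=\Omega_{X'Y'}$ then $M_{XY}/(\hbar\operatorname*{Tr}X)=M_{X'Y'}/(\hbar\operatorname*{Tr}X')$. The decisive structural point is the block form (\ref{mf}): the lower-right $n\times n$ block of $M_{XY}$ is always the identity, so the lower-right block of $M_{XY}/(\hbar\operatorname*{Tr}X)$ is $(\hbar\operatorname*{Tr}X)^{-1}I$. Equating these on both sides forces $\operatorname*{Tr}X=\operatorname*{Tr}X'$, hence $M_{XY}=M_{X'Y'}$. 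The off-diagonal blocks then give $Y=Y'$, and the top-left blocks, combined with $Y=Y'$, give $X^{2}=(X')^{2}$, whence $X=X'$ by uniqueness of the positive square root of a positive-definite symmetric matrix.

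For the second injectivity, if $\psi_{X',Y'}=c\,\psi_{X,Y}$ with $|c|=1$, comparing moduli removes both $Y$, $Y'$ and the phase $c$, leaving the pointwise identity $(\det X')^{1/4}e^{-\frac{1}{2\hbar}X'x\cdot x}=(\det X)^{1/4}e^{-\frac{1}{2\hbar}Xx\cdot x}$, which forces the quadratic forms $Xx\cdot x=X'x\cdot x$ to coincide and hence $X=X'$ by symmetry. Substituting back, the remaining phase relation $e^{-\frac{i}{2\hbar}Y'x\cdot x}=c\,e^{-\frac{i}{2\hbar}Yx\cdot x}$ shows that $(Y-Y')x\cdot x$ is constant in $x$, and symmetry again yields $Y=Y'$. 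Combining the two injectivity results, the assignment $(X,Y)\mapsto(\Omega_{XY},|\psi_{X,Y}\rangle)$ identifies both $\operatorname{Fermi}_{0}(n)$ and $\operatorname*{Gauss}_{0}(n)$ with $\mathcal{P}$, and $\Phi_{\operatorname{Fermi}}$ corresponds to the identity on parameter labels, so it is a bijection. The only nontrivial step is the extraction of $(X,Y)$ from the ellipsoid, which is purely linear-algebraic and hinges on the fact that the $p$-block of $M_{XY}$ is normalized to $I$; I do not anticipate a real obstacle beyond carefully tracking that normalization.
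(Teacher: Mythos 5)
Your proof is correct and follows the same route as the paper: recover $(X,Y)$ from the matrix of the ellipsoid $\Omega_{XY}$ and observe that distinct pairs $(X,Y)$ give distinct Gaussians. You supply two details the paper glosses over --- that the set $\Omega_{XY}$ a priori only determines $M_{XY}/(\hbar\operatorname{Tr}X)$, so one must use the identity lower-right block of (\ref{mf}) to pin down the normalization before extracting $Y$ and then $X=\sqrt{(X^{2}+Y^{2})-Y^{2}}$, and the explicit modulus/phase argument showing $|\psi_{X,Y}\rangle$ determines $(X,Y)$ --- and both additions are accurate and worthwhile.
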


\begin{proof}
The ellipsoid $\Omega_{XY}$ uniquely determined the matrices $X$ and $Y$ via
its matrix $M_{XY}$ (\ref{mf}). \ If $M_{XY}\neq M_{X^{\prime}Y^{\prime}%
}^{\prime}$ we cannot have both $X=X^{\prime}$ and $Y=Y^{\prime}$ hence
$\psi_{X,Y}\neq\psi_{X^{\prime},Y^{\prime}}$. Surjectivity is obvious.
\end{proof}

One constructs similarly a bijection $\operatorname{Fermi}(n)\longrightarrow
\operatorname*{Gauss}(n)$ by translating Fermi ellipsoids and applying the
Heisenberg--Weyl operators to the Gaussians $\psi_{X,Y}$.

\subsection{Symplectic capacities of Fermi ellipsoids}

We denote by $\Omega_{XY}$ the ellipsoid\ $M_{\mathrm{F}}z\cdot z\leq
\hbar\operatorname*{Tr}X$ bounded by the Fermi hypersurface $\mathcal{H}%
_{\mathrm{F}}$ corresponding to the squeezed coherent state $\psi_{X,Y}$. Let
us perform the symplectic change of variables $z^{\prime}=Sz$; in the new
coordinates the ellipsoid $\Omega_{XY}$ is represented by the inequality
\begin{equation}
Xx^{\prime}\cdot x^{\prime}+Xp^{\prime}\cdot p^{\prime}\leq\hbar
\operatorname*{Tr}X \label{ferx}%
\end{equation}
hence $c(\Omega_{XY})$ equals the symplectic capacity of the ellipsoid
(\ref{ferx}). Applying the rule above we thus have to find the symplectic
eigenvalues of the block-diagonal matrix $%
\begin{pmatrix}
X & 0\\
0 & X
\end{pmatrix}
$; a straightforward calculation shows that these are just the eigenvalues
$\omega_{1},...,\omega_{n}$ of $X$ and hence%
\begin{equation}
c(\Omega_{XY})=\pi\hbar\operatorname*{Tr}X/\omega_{\max} \label{cwf}%
\end{equation}
where $\omega_{\max}=\max\{\omega_{1},...,\omega_{n}\}$. (Cf. the proof of
Proposition \ref{propinclus}). In view of the trivial inequality
\begin{equation}
\omega_{\max}\leq\operatorname*{Tr}X=\sum_{j=1}^{n}\omega_{j}\leq
n\lambda\omega_{\max} \label{maxnmax}%
\end{equation}
it follows that we have
\begin{equation}
\frac{1}{2}h\leq c(\Omega_{XY})\leq\frac{nh}{2}. \label{nh}%
\end{equation}
Notice that when all the eigenvalues $\omega_{j}$ are equal to a number
$\omega$ then $c(\Omega_{XY})=nh/2$; in particular when $n=1$ we have
$c(\Omega_{XY})=h/2$ which is exactly the action calculated along the
trajectory corresponding to the ground state. Let us come back to the
interpretation of the ellipsoid defined by the inequality (\ref{ferx}). We
have seen that the symplectic eigenvalues of the matrix $%
\begin{pmatrix}
X & 0\\
0 & X
\end{pmatrix}
$ are precisely the eigenvalues $\omega_{j}$, $1\leq j\leq n$, of the
positive-definite matrix $X$. It follows that there exist symplectic
coordinates $(x^{\prime\prime},p^{\prime\prime})$ in which the equation of the
ellipsoid $\Omega_{XY}$ takes the normal form%
\begin{equation}
\sum_{j=1}^{n}\omega_{j}(x_{j}^{\prime\prime2}+p_{j}^{\prime\prime2})\leq
\sum_{j=1}^{n}\hbar\omega_{j} \label{omf1}%
\end{equation}
whose quantum-mechanical interpretation is clear: dividing both sides by two
we get the energy shell of the anisotropic harmonic oscillator in its ground
state. Consider now the planes $\mathcal{P}_{1},\mathcal{P}_{2},..,\mathcal{P}%
_{n}$ of conjugate coordinates $(x_{1},p_{1})$, $(x_{2},p_{2})$,...,
$(x_{n},p_{n})$. The intersection of the ellipsoid $\Omega_{XY}$ with these
planes are the circles
\begin{gather*}
C_{1}:\omega_{1}(x_{1}^{\prime\prime2}+p_{1}^{\prime\prime2})\leq\sum
_{j=1}^{n}\hbar\omega_{j}\\
C_{2}:\omega_{2}(x_{2}^{\prime\prime2}+p_{2}^{\prime\prime2})\leq\sum
_{j=1}^{n}\hbar\omega_{j}\\
\cdot\cdot\cdot\cdot\cdot\cdot\cdot\cdot\cdot\cdot\cdot\cdot\cdot\cdot
\cdot\cdot\cdot\\
C_{n}:\omega_{n}(x_{n}^{\prime\prime2}+p_{n}^{\prime\prime2})\leq\sum
_{j=1}^{n}\hbar\omega_{j}.
\end{gather*}
Formula (\ref{cwf}) says that $c(\Omega_{XY})$ is precisely the area of the
circle $C_{j}$ with smallest area., which corresponds to the index $j$ such
that $\omega_{j}=\omega_{\max}$. This is of course perfectly in accordance
with the definition of the Hofer--Zehnder capacity $c^{_{\mathrm{HZ}}}%
(\Omega_{XY})$ since all symplectic capacities agree on ellipsoids. This leads
us now to another question: is there any way to describe topologically fermi's
ellipsoid in such a way that the areas of every circle $C_{j}$ becomes
apparent? The problem with the standard capacity of an ellipsoid is that it
only \textquotedblleft sees\textquotedblright\ the smallest cut of that
ellipsoid by a plane of conjugate coordinate. The way out of this difficult
lies in the use of the Ekeland--Hofer capacities $c_{j}^{\mathrm{EH}}$
described in the Appendix. To illustrate the idea, let us first consider the
case $n=2$; it is no restriction to assume $\omega_{1}\leq\omega_{2}$. If
$\omega_{1}=\omega_{2}$ then the ellipsoid%
\begin{equation}
\omega_{1}(x_{1}^{\prime\prime2}+p_{1}^{\prime\prime2})+\omega_{2}%
(x_{2}^{\prime\prime2}+p_{2}^{\prime\prime2})\leq\hbar\omega_{1}+\hbar
\omega_{2} \label{omf2}%
\end{equation}
is the ball $B^{2}(\sqrt{2\hbar})$ whose symplectic capacity is $2\pi\hbar=h$.
Suppose now $\omega_{1}<\omega_{2}$. Then the Ekeland--Hofer capacities are
the numbers%
\begin{equation}
\frac{\pi\hbar}{\omega_{2}}(\omega_{1}+\omega_{2}),\frac{\pi\hbar}{\omega_{1}%
}(\omega_{1}+\omega_{2}),\frac{2\pi\hbar}{\omega_{2}}(\omega_{1}+\omega
_{2}),\frac{2\pi\hbar}{\omega_{1}}(\omega_{1}+\omega_{2}),.... \label{seq}%
\end{equation}
and hence
\[
c_{1}^{\mathrm{EH}}(\Omega_{XY})=c(\Omega_{XY})=\frac{\pi\hbar}{\omega_{2}%
}(\omega_{1}+\omega_{2}).
\]
What about $c_{2}^{\mathrm{EH}}(\Omega_{XY})$? A first glance at the sequence
(\ref{seq}) suggests that we have
\[
c_{2}^{\mathrm{EH}}(\Omega_{XY})=\frac{\pi\hbar}{\omega_{1}}(\omega_{1}%
+\omega_{2})
\]
but this is only true if $\omega_{1}<\omega_{2}\leq2\omega_{1}$ because if
$2\omega_{1}<\omega_{2}$ then $(\omega_{1}+\omega_{2})/\omega_{2}<(\omega
_{1}+\omega_{2})/\omega_{1}$ so that in this case
\[
c_{2}^{\mathrm{EH}}(\Omega_{XY})=\frac{\pi\hbar}{\omega_{2}}(\omega_{1}%
+\omega_{2})=c_{1}^{\mathrm{EH}}(\Omega_{XY}).
\]
The Ekeland--Hofer capacities thus allow a topological classification of the eigenstate.

\section{Quantum Blobs and $\psi_{X,Y}$}

We have introduced in our previous work \cite{blob,goluPR} the notion of
"quantum blob" as being a phase space cell of minimum uncertainty. Rigorously
stated a quantum blob is the image of a phase space ball with radius
$\sqrt{\hbar}by$ a linear symplectic transformation. It follows from Gromov's
non-squeezing theorem \cite{gr85} that the shadow of a quantum blob on any
plane of conjugate coordinates $x_{j},p_{j}$ is always at least $\pi
\hbar=\frac{1}{2}h$ whereas its shadow on planes of non-conjugate coordinates
can be arbitrarily small.

\subsection{Definition and properties}

By definition a quantum blob is the image of the phase space ball
$B^{2n}(z_{0},\sqrt{\hbar}):|z-z_{0}|\leq\sqrt{\hbar}$ by a linear canonical
transformation $S\in\operatorname*{Sp}(n)$. We denote by $\operatorname*{Blob}%
(n)$ the set of all quantum blobs in $\mathbb{R}^{2n}$ and by
$\operatorname*{Blob}_{0}(n)$ the subset consisting of all quantum blobs
centered at $z_{0}=0$, that is $S(B^{2n}\sqrt{\hbar}))$ where $B^{2n}%
\sqrt{\hbar})$ is the ball with radius $^{2n}\sqrt{\hbar}$ centered at the
origin. We have
\[
S(B^{2n}(z_{0},\sqrt{\hbar}))=S(T(z_{0})B^{2n}\sqrt{\hbar}))=T(Sz_{0}%
)S(B^{2n}\sqrt{\hbar}))
\]
showing that every element of $\operatorname*{Blob}(n)$ is obtained by
translating an element of $\operatorname*{Blob}_{0}(n).$

The following important result that quantum blobs do not use the full
symplectic group to be defined:

\begin{proposition}
For every granum blob $S(B^{2n}(z_{0},\sqrt{\hbar}))\in\operatorname*{Blob}%
(n)$ there exist unique matrices $P^{T}$ and $L=L^{T}>0$ such that
\begin{equation}
S(B^{2n}(z_{0},\sqrt{\hbar}))=V_{P}M_{L}(B^{2n}(z_{0},\sqrt{\hbar}))).
\label{SB}%
\end{equation}
If $S=%
\begin{pmatrix}
A & B\\
C & D
\end{pmatrix}
$ then
\begin{gather}
L=(AA^{T}+BB^{T})^{-1/2}=L^{T}>0\label{pl1}\\
P=-(CA^{T}+DB^{T})(AA^{T}+BB^{T})^{-1}=P^{T}. \label{pl2}%
\end{gather}

\end{proposition}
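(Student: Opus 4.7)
The plan is to exploit a pre-Iwasawa factorization of $\operatorname{Sp}(n)$: every $S\in\operatorname{Sp}(n)$ decomposes uniquely as $S=V_P M_L U$ with $P=P^T$, $L=L^T>0$, and $U\in\operatorname{Sp}(n)\cap O(2n)=U(n)$. Once this is established the proposition is immediate, because the orthogonal factor $U$ fixes every Euclidean ball centered at the origin; the translated statement follows from the compatibility $S(B^{2n}(z_0,\sqrt\hbar))=T(Sz_0)S(B^{2n}(\sqrt\hbar))$ already recorded just before the statement, after the corresponding translation is absorbed on the right-hand side.

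Rather than invoke the abstract Iwasawa theorem, I would construct $L$ and $P$ explicitly from the blocks of $S=\begin{pmatrix}A&B\\C&D\end{pmatrix}$. The positive-definite symmetric matrix $SS^T$ has blocks $AA^T+BB^T$, $AC^T+BD^T$, and $CC^T+DD^T$. Setting $L:=(AA^T+BB^T)^{-1/2}$ (well-defined since $S$ is invertible, so $AA^T+BB^T>0$) and $P:=-(CA^T+DB^T)(AA^T+BB^T)^{-1}$, the symmetry $P=P^T$ is equivalent to the off-diagonal block of $SS^T$ being symmetric, which is automatic. A direct block multiplication then aims at the identity $V_P M_L^2 V_P^T=SS^T$: the top-left and off-diagonal blocks match by construction, while the bottom-right identity $PL^{-2}P+L^2=CC^T+DD^T$ is forced by the symplectic block relations $AB^T=BA^T$, $CD^T=DC^T$, $AD^T-BC^T=I$. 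Defining $U:=(V_P M_L)^{-1}S$ and substituting the factorization just obtained gives $UU^T=M_L^{-1}V_P^{-1}(SS^T)V_P^{-T}M_L^{-1}=I$; since $U$ is also symplectic as a product of symplectic matrices, $U\in U(n)$, which closes the existence part.

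For uniqueness, suppose $V_P M_L$ and $V_{P'}M_{L'}$ map $B^{2n}(\sqrt\hbar)$ onto the same set. Then the linear map $(V_{P'}M_{L'})^{-1}V_P M_L$ preserves that ball, so it is orthogonal; being symplectic, it lies in $U(n)$, whence $V_P M_L^2 V_P^T=V_{P'}M_{L'}^2V_{P'}^T$. Reading off the top-left block gives $L=L'$, and then the off-diagonal block gives $P=P'$. The main technical obstacle is the bottom-right block identity in the existence step: it is the one place where all three symplectic block relations must be combined, and where the symplectic (not merely positive) nature of $SS^T$ is essential. Everything else reduces to bookkeeping with block matrices.
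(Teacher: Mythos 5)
Your overall route is the same as the paper's: both rest on the pre-Iwasawa decomposition $S=V_PM_LU$ with $U\in\operatorname{Sp}(n)\cap O(2n)$ and the observation that the orthogonal factor fixes the centered ball. The difference is that the paper simply cites this decomposition (with the explicit formulas for $L$ and $P$) from the literature, whereas you reconstruct it by factoring $SS^{T}=V_PM_L^{2}V_P^{T}$; you also supply a genuine uniqueness argument for the set equality $S(B^{2n}(\sqrt{\hbar}))=V_PM_L(B^{2n}(\sqrt{\hbar}))$ (any two admissible $V_PM_L$ differ by an element of $\operatorname{Sp}(n)\cap O(2n)$, and comparing Gram matrices then forces $L=L'$ and $P=P'$), which is actually more faithful to what the proposition asserts than the paper's appeal to uniqueness of the matrix factorization alone.

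One step is misjustified. You claim that $P=P^{T}$ ``is equivalent to the off-diagonal block of $SS^{T}$ being symmetric, which is automatic.'' The symmetry of $SS^{T}$ only tells you that the bottom-left block $R=CA^{T}+DB^{T}$ is the transpose of the top-right block $AC^{T}+BD^{T}$; it does not make $R$ itself symmetric, nor is $P=-RQ^{-1}$ (with $Q=AA^{T}+BB^{T}$) symmetric for a general symmetric positive definite matrix $\begin{pmatrix}Q&R^{T}\\R&T\end{pmatrix}$. What you actually need is the relation $QR=R^{T}Q$, and this comes from the symplectic condition: $G=SS^{T}$ is both symmetric and symplectic, so $GJG=J$, whose top-left block reads precisely $QR-R^{T}Q=0$; then $P-P^{T}=Q^{-1}\left(R^{T}-QRQ^{-1}\right)=0$. (The paper's version of this point is the remark that $V_P$ must be symplectic, which is only possible if $P$ is symmetric.) So the symplectic block relations are needed not only for your bottom-right identity $PL^{-2}P+L^{2}=CC^{T}+DD^{T}$ but already for the symmetry of $P$; once this justification is corrected, the rest of your argument goes through.
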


\begin{proof}
It suffices to consider the case $z_{0}=0$. Recall the pre-Iwasawa (KAM)
decomposition \cite{Dutta,iwa} (see \cite{Birk}, Ch.2, for detailed
calculations). of a symplectic matrix. It says that for every $S\in
\operatorname*{Sp}(n)$there exist unique matrices $L$ and $P$ symmetric,
$X>0$, and $R\in\operatorname*{Sp}(n)\cap O(2n)$ such that $S=V_{P}M_{L}R$
\textit{i.e.}
\begin{equation}
S=%
\begin{pmatrix}
I & 0\\
P & I
\end{pmatrix}%
\begin{pmatrix}
L^{-1} & 0\\
-P & L
\end{pmatrix}%
\begin{pmatrix}
U & V\\
-V & U
\end{pmatrix}
\label{iwa1}%
\end{equation}
where $P$ \ and $L$ are given by (\ref{pl1})--(\ref{pl1}), and $U$ and $V$ are
given by%
\begin{equation}
U=(AA^{T}+BB^{T})^{-1/2}A\text{ \ },\text{ \ }V=(AA^{T}+BB^{T})^{-1/2}B.
\label{unixy}%
\end{equation}
Now, $R((B^{2n}(\sqrt{\hbar})=B^{2n}(\sqrt{\hbar})$ by rotational symmetry,
hence the result. The only not quite obvious statement here is that $P=P^{T}$
but this follows from the fact that $V_{P}$ must be symplectic, which is
possible if and only if $P$ is symmetric.
\end{proof}

\subsection{The identification of $\operatorname*{Blob}(n)$ and
$\operatorname*{Gauss}(n)$}

We claim that to each quantum blob $S$ $B^{2n}(z_{0},\sqrt{\hbar})$ we can
associate a Gaussian $\psi_{XY}^{(z_{0}}$, and that this association is a
bijection $\operatorname*{Blob}(n)\longleftrightarrow\operatorname*{Gauss}%
(n)$. it is sufficient to prove the result in the centered case, the general
case follows trivially using phase space translations and Weyl--Heisenberg operators.

\begin{proposition}
The mapping $\Phi_{\operatorname*{Blob}}:\operatorname*{Blob}_{0}%
(n)\longrightarrow\operatorname*{Gauss}_{0}(n)$ defined by%
\[
\Phi_{\operatorname*{Blob}}:V_{-P}M_{L}((B^{2n}(\sqrt{\hbar}))\longmapsto
\widehat{V}_{-P}\widehat{M}_{L^{1/2}}\phi_{0}%
\]
where $L=L^{T}>0$ and $\phi_{0}$ the standard coherent state $P=P^{T}$ and is
a bijection. This bijection is related to the Wigner transform of of
$\psi_{XY}$ by the relation
\end{proposition}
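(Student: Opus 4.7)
The plan is to lean on the previous proposition (pre-Iwasawa/KAM decomposition), which tells us that every quantum blob in $\operatorname*{Blob}_{0}(n)$ has a \emph{unique} representation of the form $V_{-P}M_{L}(B^{2n}(\sqrt{\hbar}))$ with $L=L^{T}>0$ and $P=P^{T}$, together with the equally unique factorization $\psi_{X,Y}=\widehat{V}_{Y}\widehat{M}_{X^{1/2}}\phi_{0}$ supplied by (\ref{fixy}). The bijection then essentially becomes the statement that passing from a symplectic matrix to its metaplectic lift (acting on $\phi_{0}$) respects both uniqueness statements.

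First I would check well-definedness: if two quantum blobs coincide, $V_{-P}M_{L}(B^{2n}(\sqrt{\hbar}))=V_{-P'}M_{L'}(B^{2n}(\sqrt{\hbar}))$, then the uniqueness clause in the previous proposition forces $P=P'$ and $L=L'$, so the right-hand side $\widehat{V}_{-P}\widehat{M}_{L^{1/2}}\phi_{0}$ is unambiguously assigned. For injectivity, suppose $\widehat{V}_{-P}\widehat{M}_{L^{1/2}}\phi_{0}$ and $\widehat{V}_{-P'}\widehat{M}_{L'^{1/2}}\phi_{0}$ represent the same class in $\operatorname*{Gauss}_{0}(n)$. Rewriting both by (\ref{fixy}) as $\psi_{L,-P}$ and $\psi_{L',-P'}$ and inspecting the explicit Gaussian expression (\ref{squeezed}), the quadratic form $(L+L')+i(-P+P')$ must be forced to agree (up to a unimodular phase that is absorbed by the normalization $(\det X/(\pi\hbar)^{n})^{1/4}$). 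Separating real and imaginary parts gives $L=L'$ and $P=P'$. Surjectivity is immediate: given $\psi_{X,Y}\in\operatorname*{Gauss}_{0}(n)$, set $L:=X>0$ and $P:=-Y$, both symmetric; then $\Phi_{\operatorname*{Blob}}(V_{Y}M_{X}(B^{2n}(\sqrt{\hbar})))=\widehat{V}_{Y}\widehat{M}_{X^{1/2}}\phi_{0}=\psi_{X,Y}$ by (\ref{fixy}).

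For the Wigner relation, I would use formulas (\ref{ess})--(\ref{omegablob}) and (\ref{smin1}) already in the text: the covariance ellipsoid of $W\psi_{X,Y}$ is $\Omega_{\Sigma}=S^{-1}(B^{2n}(\sqrt{\hbar}))$ with $S^{-1}=M_{X^{1/2}}V_{Y}$. Consequently the unique quantum blob that $\Phi_{\operatorname*{Blob}}^{-1}$ sends to $\psi_{X,Y}$ is precisely the Wigner covariance ellipsoid of $\psi_{X,Y}$, so $\Phi_{\operatorname*{Blob}}^{-1}$ literally coincides with the assignment $\psi\mapsto\Omega_{\Sigma(\psi)}$.

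The only delicate step is the bookkeeping in the injectivity argument: one has to be careful that the two symmetric pieces $X$ and $Y$ in $\psi_{X,Y}$ really are read off independently from the Gaussian and not entangled by the normalizing prefactor or by a global phase (recall the equivalence relation $\sim$ with $|c|=1$). Everything else is a formal consequence of the uniqueness statements already proved. The extension from $\operatorname*{Blob}_{0}(n)$ and $\operatorname*{Gauss}_{0}(n)$ to the displaced classes $\operatorname*{Blob}(n)$ and $\operatorname*{Gauss}(n)$ is then obtained, as in the Fermi case, by intertwining with $\widehat{T}(z_{0})$ and the translation $z\mapsto z+z_{0}$, using the covariance $W(\widehat{T}(z_{0})\psi)(z)=W\psi(z-z_{0})$.
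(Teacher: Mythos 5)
Your proof of the bijection itself is correct and follows essentially the same route as the paper: well-definedness and injectivity rest on the uniqueness of the pre-Iwasawa factorization $S(B^{2n}(\sqrt{\hbar}))=V_{-P}M_{L}(B^{2n}(\sqrt{\hbar}))$ from the preceding proposition, and surjectivity on the factorization $\psi_{X,Y}=\widehat{V}_{Y}\widehat{M}_{X^{1/2}}\phi_{0}$ of (\ref{fixy}). You in fact do more than the paper: the paper merely asserts injectivity, whereas you verify it by comparing the quadratic forms $L-iP$ and $L^{\prime}-iP^{\prime}$ in the explicit Gaussians (correctly taking into account the unimodular constant allowed by the equivalence relation), and your surjectivity bookkeeping ($L:=X$, $P:=-Y$, blob $V_{Y}M_{X}(B^{2n}(\sqrt{\hbar}))$) is the one consistent with the stated definition of $\Phi_{\operatorname*{Blob}}$, while the paper's own formula $V_{-Y}M_{X^{2}}(B^{2n}(\sqrt{\hbar}))$ does not match that definition.

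The one genuine problem is your closing claim that $\Phi_{\operatorname*{Blob}}^{-1}$ coincides with the assignment $\psi_{X,Y}\longmapsto\Omega_{\Sigma}$. With the map as defined, $\Phi_{\operatorname*{Blob}}^{-1}(\psi_{X,Y})=V_{Y}M_{X}(B^{2n}(\sqrt{\hbar}))$ (take $P=-Y$ and $L=X$, so that $\widehat{V}_{-P}\widehat{M}_{L^{1/2}}=\widehat{V}_{Y}\widehat{M}_{X^{1/2}}$), whereas the Wigner covariance ellipsoid (\ref{omegablob}) is $\Omega_{\Sigma}=S^{-1}(B^{2n}(\sqrt{\hbar}))$ with $S^{-1}=V_{Y}M_{X^{1/2}}$; note in passing that (\ref{smin1}) should read $V_{Y}M_{X^{1/2}}$ rather than $M_{X^{1/2}}V_{Y}$ unless $X$ and $Y$ commute. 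Since $M_{X}(B^{2n}(\sqrt{\hbar}))=M_{X^{1/2}}(B^{2n}(\sqrt{\hbar}))$ only when $X=I$ (compare the quadratic form of $V_{Y}M_{X}(B^{2n}(\sqrt{\hbar}))$, whose lower-right block is $X^{-2}$, with the block $X^{-1}$ of $G$ in (\ref{G})), the quantum blob assigned to $\psi_{X,Y}$ by $\Phi_{\operatorname*{Blob}}^{-1}$ is \emph{not} the Wigner ellipsoid in general; the two would coincide only if $\widehat{M}_{L^{1/2}}$ were replaced by $\widehat{M}_{L}$ in the definition of the map. The proposition's statement of the Wigner relation is itself left unfinished in the paper, so there is nothing to check your claim against, but as written it does not follow from (\ref{ess})--(\ref{omegablob}).
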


\begin{proof}
Recall (formula (\ref{SB})) that every $S(B^{2n}(\sqrt{\hbar}))\in
\operatorname*{Blob}_{0}(n)$can be written as $V_{-P}M_{L}((B^{2n}(\sqrt
{\hbar}))$ and this in a unique way. The mapping $\Phi_{\operatorname*{Blob}}$
is thus well-defined. \ It is also injective for if $_{-P^{\prime}%
}M_{L^{\prime}}\neq_{-P^{\prime}}M_{L^{\prime}}$ then $\widehat{V}%
_{-P^{\prime}}\widehat{M}_{L^{\prime1/2}}\phi_{0}\neq\widehat{V}%
_{-P}\widehat{M}_{L^{1/2}}\phi_{0}$. There remains to show that $\Phi
_{\operatorname*{Blob}}$ is surjective. In view of formula (\ref{fixy}) that
every $\in\psi_{X,Y}\in\operatorname*{Gauss}_{0}(n)$ can be written
$\psi_{X,Y}=\widehat{V}_{Y}\widehat{M}_{X^{1/2}}\phi_{0}$ hence
\[
\psi_{X,Y}=\Phi_{\operatorname*{Blob}}(V_{-Y}M_{X^{2}}((B^{2n}(\sqrt{\hbar
})).
\]

\end{proof}

A related result is that every Fermi ellipsoid contains a quantum blob.

\begin{proposition}
\label{propinclus}Every Fermi ellipsoid contains a quantum blob. More
precisely, for every $\psi_{X,Y}^{x_{0},p_{0}}\in\operatorname*{Gauss}(n)$ we
have
\begin{equation}
\Phi_{\mathrm{blob}}^{-1}(\psi_{X,Y}^{x_{0},p_{0}})\subset\Phi
_{\mathrm{\operatorname{Fermi}}}^{-1}(\psi_{X,Y}^{x_{0},p_{0}})).
\label{inclus}%
\end{equation}

\end{proposition}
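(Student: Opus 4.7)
The plan is to reduce to the centered case and then exploit the fact, already visible in formulas (\ref{ess})--(\ref{mfs}), that one and the same symplectic matrix $S$ simultaneously diagonalizes both the Fermi ellipsoid (via $M_{XY} = S^T D_X S$) and the quantum blob (which is, up to the inevitable $\sqrt{\hbar}$ normalization, $S^{-1}(B^{2n}(\sqrt{\hbar}))$). Covariance of the Fermi construction under $\widehat{T}(z_{0})$ and of quantum blobs under phase-space translations reduces (\ref{inclus}) to the statement that, for the \emph{centered} state $\psi_{X,Y}$, one has $\Phi_{\mathrm{blob}}^{-1}(\psi_{X,Y}) \subset \Omega_{XY}$.

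First I would apply the symplectic change of coordinates $z' = Sz$ with $S$ as in (\ref{sxy}). On the Fermi side, $M_{XY}z\cdot z = D_X Sz\cdot Sz$, so $\Omega_{XY}$ is mapped to
\[
S(\Omega_{XY}) = \{z' = (x',p') : X x'\cdot x' + X p'\cdot p' \leq \hbar\operatorname{Tr}X\}.
\]
On the quantum-blob side, the identification of $\Phi_{\mathrm{blob}}^{-1}(\psi_{X,Y})$ with the Wigner covariance ellipsoid in (\ref{omegablob}) gives $S\bigl(\Phi_{\mathrm{blob}}^{-1}(\psi_{X,Y})\bigr) = B^{2n}(\sqrt{\hbar})$. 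The desired inclusion is therefore equivalent to
\[
B^{2n}(\sqrt{\hbar}) \subset \{(x',p') : X x'\cdot x' + X p'\cdot p' \leq \hbar\operatorname{Tr}X\}.
\]

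Second, the latter is a trivial linear-algebra estimate. Let $\omega_{1},\dots,\omega_{n}$ be the (positive) eigenvalues of $X$ and $\omega_{\max} = \max_j \omega_j$. For any $z' = (x',p')$ with $|x'|^2 + |p'|^2 \leq \hbar$ we have
\[
X x'\cdot x' + X p'\cdot p' \leq \omega_{\max}\bigl(|x'|^2 + |p'|^2\bigr) \leq \omega_{\max}\hbar,
\]
and the right-hand bound in (\ref{maxnmax}), $\omega_{\max}\leq\sum_j \omega_j = \operatorname{Tr}X$, finishes the job. Pulling everything back by $S^{-1}$ yields (\ref{inclus}) in the centered case, and the translated case follows from the covariance properties already used in the reduction.

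The main obstacle here is really just a bookkeeping one: one has to verify that the symplectic matrix $S$ appearing in the factorization $M_{XY} = S^{T}D_{X}S$ of the Fermi matrix is \emph{literally the same} $S$ that realizes the Wigner covariance ellipsoid of $\psi_{X,Y}$ as $S^{-1}(B^{2n}(\sqrt{\hbar}))$. Once this common $S$ is observed---essentially the content of (\ref{wgf})---the proof collapses to the one-line eigenvalue estimate above, and the inclusion is tight precisely when $X = \omega I$ is scalar, which fits the fact recorded after (\ref{nh}) that $c(\Omega_{XY}) = nh/2$ saturates in that case.
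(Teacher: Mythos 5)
Your proof is correct and follows essentially the same route as the paper: reduce to the centered case, note that the same symplectic matrix $S$ from (\ref{ess}) appears in both $M_{XY}=S^{T}D_{X}S$ and $\Phi_{\mathrm{blob}}^{-1}(\psi_{X,Y})=S^{-1}(B^{2n}(\sqrt{\hbar}))$, and reduce the inclusion to the eigenvalue bound $\omega_{\max}\leq\operatorname{Tr}X$ (the paper phrases this as the matrix inequality $(\operatorname{Tr}X)^{-1}D_{X}\leq I_{2n}$ rather than via the explicit change of variables $z^{\prime}=Sz$, but the content is identical). Only your closing aside is off: for $n\geq 2$ the inclusion is never tight since $\omega_{\max}<\sum_{j}\omega_{j}$ strictly when $X>0$, and having all eigenvalues equal saturates the \emph{upper} bound $nh/2$ in (\ref{nh}) for the capacity, not the inclusion.
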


\begin{proof}
It is sufficient to consider the case $z_{0}=0.$ By definition%
\[
\Phi_{\mathrm{\operatorname{Fermi}}}^{-1}(\psi_{X,Y}^{x_{0},p_{0}}%
)=\{z::S^{T}D_{X}Sz\cdot z\leq\hbar\operatorname*{Tr}X\}
\]
where
\begin{equation}
D_{X}=%
\begin{pmatrix}
X & 0\\
0 & X
\end{pmatrix}
\text{ \ },\text{ \ }S=%
\begin{pmatrix}
X^{1/2} & 0\\
X^{-1/2}Y & X^{-1/2}%
\end{pmatrix}
.
\end{equation}
On the other it follows from (\ref{omegablob}) \ that we have%
\[
\Phi_{\mathrm{blob}}^{-1}(\psi_{X,Y}^{x_{0},p_{0}})=\{z::S^{T}Sz\cdot
z\leq\hbar\}
\]
hence thew inclusion (\ref{inclus}) requires that
\[
S^{T}Sz\cdot z\leq\hbar\Longrightarrow S^{T}D_{X}Sz\cdot z\leq\hbar
\operatorname*{Tr}X
\]
that is $S^{T}S\geq(\operatorname*{Tr}X)^{-1}S^{T}D_{X}S$ which clearly holds
since:indeed, this is equivalent to the inequality $(\operatorname*{Tr}%
X)^{-1}D_{X}\leq I_{2n}$ which is trivial.
\end{proof}

\subsection{Quantum blobs and symplectic capacity}

All symplectic capacities agree on quantum blobs since the latter are phase
space ellipsoids:%
\begin{equation}
c(SB^{2n}(z_{0},\sqrt{\hbar}))=c(B^{2n}(z_{0},\sqrt{\hbar}))=\pi\hbar.
\label{cap1}%
\end{equation}
We actually introduced the notion of quantum blob in connection with the
following result, which is a symplectically invariant formulation of the
strong uncertainty principle: Recall first Let $\widehat{\rho}$ be a trace
class operator on $L^{2}(\mathbb{R}^{n})$, with trace 1: $\operatorname*{Tr}%
\widehat{\rho}=1.$ It is a density operator (or density matrix, in physics) if
it is in addition positive semi-definite: $\widehat{\rho}\geq0$ (in which case
it is also self-adjoint). Calculating the covariance matrix $\Sigma$ of
$\widehat{\rho}$ (if defined!) from its Weyl symbol $\rho$ one proves that the
condition $\widehat{\rho}\geq0$ implies that we must have \cite{QHA,Narcow}
$\Sigma+\frac{i\hbar}{2}\geq0$, that is
\begin{equation}
\Sigma+\frac{i\hbar}{2}J\text{ \ \emph{is positive semidefinite.}}
\label{quantum}%
\end{equation}
This condition is necessary not sufficient to ensure the condition
$\widehat{\rho}\geq0$, except when $\rho$ is a Gaussian. It turns out that the
condition (\ref{quantum}) implies --but is mot equivalent to-- the
Robertson--Schr\"{o}dinger inequalities \cite{cogoni,Dutta,QHA}: writing
\[
\Sigma=%
\begin{pmatrix}
\Delta x^{2} & \Delta(x,p)\\
\Delta(p,x) & \Delta p^{2}%
\end{pmatrix}
\text{ \ , \ }\Delta(x,p)=\Delta(p,x)^{T}%
\]
where $\Delta x^{2}=(\Delta x_{j}\Delta x_{k})$ the latter are
\begin{equation}
(\Delta x_{j})_{\widehat{\rho}}^{2}(\Delta p_{j})_{\widehat{\rho}}^{2}%
\geq\Delta(x_{j},p_{j})_{\widehat{\rho}}^{2}+\tfrac{1}{4}\hbar^{2}. \label{RS}%
\end{equation}
Condition (\ref{quantum}) implies that the covariance matrix $\Sigma$ is
positive definite, hence invertible. It follows that we can define the
covariance ellipsoid.%
\[
\Omega_{\Sigma}=\{z:\frac{1}{2}\Sigma^{-1}z\cdot z\leq1\}.
\]
We proved in \cite{go09} (also see \cite{goluPR}) the following result which
links the theory of symplectic capacities to the uncertainty principle:

\begin{theorem}
The quantum condition $\Sigma+\frac{i\hbar}{2}J\geq0$ \ is is equivalent to
\begin{equation}
c(\Omega_{\Sigma})\geq\pi\hbar\label{cosig}%
\end{equation}
for every symplectic capacity $c$;
\end{theorem}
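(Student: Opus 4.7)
The plan is to reduce both conditions to a single statement about the symplectic eigenvalues of $\Sigma$, using Williamson's normal form. The preliminary point is that both sides of the equivalence are symplectic invariants. For the positivity side, if $S\in\operatorname*{Sp}(n)$ then $S^{T}JS=J$, so
\[
S^{T}\Bigl(\Sigma+\tfrac{i\hbar}{2}J\Bigr)S=S^{T}\Sigma S+\tfrac{i\hbar}{2}J,
\]
and Hermitian positivity is preserved by congruence with an invertible real matrix. For the capacity side, symplectic capacities are by definition invariant under linear symplectomorphisms of $\mathbb{R}^{2n}$.

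I would then invoke Williamson's theorem to choose $S\in\operatorname*{Sp}(n)$ such that $S^{T}\Sigma S=D=\operatorname{diag}(\Lambda,\Lambda)$, $\Lambda=\operatorname{diag}(\lambda_{1},\dots,\lambda_{n})$, with $\lambda_{j}>0$ the symplectic eigenvalues of $\Sigma$. Combined with the invariances above, $\Omega_{\Sigma}$ is symplectically equivalent to the normal-form ellipsoid
\[
\Omega'=\Bigl\{w=(x',p'):\sum_{j=1}^{n}\tfrac{1}{\lambda_{j}}(x_{j}'^{2}+p_{j}'^{2})\le 2\Bigr\},
\]
whose section by the conjugate plane $(x_{j}',p_{j}')$ is a disk of area $2\pi\lambda_{j}$. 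Since all symplectic capacities agree on ellipsoids and coincide with the area of the smallest such conjugate disk (exactly as in Subsection 3.3), one has $c(\Omega_{\Sigma})=2\pi\lambda_{\min}$, so the capacity inequality $c(\Omega_{\Sigma})\ge\pi\hbar$ is equivalent to $\lambda_{\min}\ge\hbar/2$.

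Finally, after an orthogonal reordering of coordinates that pairs each $x_{j}'$ with $p_{j}'$, the Hermitian matrix $D+\tfrac{i\hbar}{2}J$ becomes block diagonal with $n$ blocks
\[
H_{j}=\begin{pmatrix}\lambda_{j} & i\hbar/2\\ -i\hbar/2 & \lambda_{j}\end{pmatrix},
\]
whose eigenvalues are $\lambda_{j}\pm\hbar/2$. Thus $D+\tfrac{i\hbar}{2}J\ge 0$ iff every $\lambda_{j}\ge\hbar/2$, i.e.\ iff $\lambda_{\min}\ge\hbar/2$, and combining with the preceding paragraph yields the theorem. The main obstacle, as I see it, is the bookkeeping needed to confirm that \emph{the same} Williamson change of basis simultaneously puts $\Sigma+\tfrac{i\hbar}{2}J$ into block form and reduces $\Omega_{\Sigma}$ to normal form; once this is verified using only $S^{T}JS=J$ and $\Sigma^{-1}=SD^{-1}S^{T}$, the remainder is elementary $2\times 2$ linear algebra.
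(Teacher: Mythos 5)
Your argument is correct and is essentially the standard proof: the paper itself gives no proof of this theorem (it defers to \cite{go09} and \cite{goluPR}), and the argument there is precisely this Williamson-normal-form reduction of both the condition $\Sigma+\frac{i\hbar}{2}J\geq0$ and the inequality $c(\Omega_{\Sigma})=2\pi\lambda_{\min}\geq\pi\hbar$ to the single statement $\lambda_{\min}\geq\hbar/2$. The bookkeeping you flag is harmless, since the same Williamson matrix $S$ serves both purposes, via the substitution $w=S^{T}z$ for the ellipsoid and the congruence $S^{T}\bigl(\Sigma+\frac{i\hbar}{2}J\bigr)S=D+\frac{i\hbar}{2}J$ for the Hermitian positivity condition.
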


Notice it the inequality $c(\Omega_{\Sigma})\geq\pi\hbar$ thus implies the
Robertson--Schr\"{o}dinger inequalities.

\section{Microlocal Pairs and $\psi_{X,Y}$}

The general notion of microlocal pair was defined by Fefferman in
\cite{Fefferman}. Here we specialize the concept, using an extended version of
polar duality between convex bodies. We introduced the idea in
\cite{Poiuntilliswme} and studied further some related mathematical ideas in
\cite{BSM}.

\subsection{Lagrangian polar duality}

We introduced formally the notion of Lagrangian polar duality in \cite{BSM}.
Recall \cite{Birk,Birkbis} that a Lagrangian plane $\ell$ in $(\mathbb{R}%
^{2n},\omega)$ is a$n$-dimensional subspace $\ell$ of $\mathbb{R}^{2n}$ on
which the symplectic form $\omega$ vanishes identically. We denote by
$\operatorname*{Lag}(n)$ the Lagrangian Grassmannian of $(\mathbb{R}%
^{2n},\omega)$:\textquotedblleft\ We will use the notation $\ell
_{X}=\mathbb{R}_{x}^{n}\times0$ and $\ell_{P}=0\times\mathbb{R}_{p}^{n}$ for
the $x$ and $p$ coordinate planes; obviously $\ell_{X},\ell_{P}\in
\operatorname*{Lag}(n)$ There is a natural continuous transitive action
$\operatorname*{Sp}(n)\times\operatorname*{Lag}(n)\longrightarrow
\operatorname*{Lag}(n);$ moreover $\operatorname*{Sp}(n)$ also acts
transitively on the subset of $\operatorname*{Lag}(n)\times\operatorname*{Lag}%
(n)$ consisting of all pairs $(\ell,\ell^{\prime})$ of transfer Lagrangian
planes (\textit{i.e}. $\ell\cap\ell^{\prime}=0$).

In \cite{BSM} we defined the notion of Lagrangian polarity: let $\ell$ and
$\ell^{\prime}$ be two transversal Lagrangian planes, and $X_{\ell}$ a
centrally symmetric convex body in $\ell$. The Lagrangian polar dual
$X_{\ell^{\prime}}^{\hbar}$ of $X_{\ell}$ in $\ell^{\prime}$ is the centrally
symmetric subset of $\ell^{\prime}$ defined by
\begin{equation}
X_{\ell^{\prime}}^{\hbar}=\{z\in\ell^{\prime}:\omega(z,z^{\prime})\leq
\hbar\text{ \ for all \ }z^{\prime}\in X_{\ell}\}. \label{ozz}%
\end{equation}
In the particular case $\ell=\ell_{X}$ and $\ell^{\prime}=\ell_{P}{}_{p}$ this
reduces, taking $\hbar=1$, to ordinary polar duality familiar from convex
geometry: if $X$ is a convex symmetric body in $\mathbb{R}_{x}^{n}$
\begin{equation}
X^{\hbar}=\{p\in\mathbb{R}_{p}^{n}:p\cdot x\leq\hbar\text{ \ for all \ }x\in
X\}. \label{polar}%
\end{equation}

Following result shows that the general case is reduced to the standard case
using symplectic transformation: Let $S\in\operatorname*{Sp}(n)$ be such that
$(\ell,\ell^{\prime})=S(\ell_{X},\ell_{P})$. Then $X_{\ell^{\prime}}^{\hbar
}\subset\ell^{\prime}$ is given by $X_{\ell^{\prime}}^{\hbar}=S(X^{\hbar})$,
that is
\begin{equation}
(X_{\ell},X_{\ell^{\prime}}^{\hbar})=S(X,X^{\hbar}) \label{sxl}%
\end{equation}
\ where $X^{\hbar}$ is the usual polar dual of $X=S^{-1}(X_{\ell})\subset
\ell_{X}$ (see \cite{BSM}).

We will be particularly concerned with the case where $X_{\ell}$ \ os a
centered ellipsoid carried by a Lagrangian plane $\ell$. Suppose first that
$\ell=\ell_{X}$ and $\ell^{\prime}=\ell_{P}{}_{p}$; then the ellipsoid%
\begin{equation}
X=\{x\in\ell_{X}:Ax\cdot x\leq\hbar\} \label{exc1}%
\end{equation}
($A=A^{T}>0$) has polar dual
\begin{equation}
X^{\hbar}=\{p\in\ell_{P}:A^{-1}p\cdot p\leq\hbar\}. \label{exc2}%
\end{equation}
In particular $B_{X}^{n}(\sqrt{\hbar})^{\hbar}=B_{P}^{n}(\sqrt{\hbar})^{\hbar
}$. In the general case it suffices to use the relation (\ref{sxl}). To handle
the case of ellipsoids carried by a arbitrary Lagrangian planes, we rewrite
$Ax\cdot x\leq\hbar$ as
\[%
\begin{pmatrix}
x & 0
\end{pmatrix}%
\begin{pmatrix}
A & 0\\
0 & 0
\end{pmatrix}%
\begin{pmatrix}
x\\
0
\end{pmatrix}
\leq\hbar.
\]
Setting $\widetilde{A}=%
\begin{pmatrix}
A & 0\\
0 & 0
\end{pmatrix}
$ we have, for $S\in\operatorname*{Sp}(n)$ such that $(\ell,\ell^{\prime
})=S(\ell_{X},\ell_{P})$ we have%
\begin{align}
X_{\ell}  &  =S(X)=\{z=x,p):(S^{T})^{-1}\widetilde{A}S^{-1}z\cdot z\leq
\hbar\}\label{sxh1}\\
X_{\ell^{\prime}}^{\hbar}  &  =S(X^{\hbar})=\{z=x,p):(S^{T})^{-1}%
\widetilde{A^{-1}}S^{-1}z\cdot z\leq\hbar\}. \label{sxh2}%
\end{align}

\subsection{Quantum blobs and microlocal pairs}

We begin with the following essential property of microlocal pairs relating
them to quantum blobs: Recall that the John ellipsoid \cite{Ball} of a convex
body in Euclidean space is the unique ellipsoid with maximum volume contained
in this convex body.

\begin{lemma}
\label{Lemma John}The John ellipsoid of $B_{X}^{n}(\sqrt{\hbar})\times
B_{P}^{n}(\sqrt{\hbar})$ is $B^{2n}(\sqrt{\hbar})$.
\end{lemma}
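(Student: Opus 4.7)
The plan is to exploit the symmetry of the body $K=B_{X}^{n}(\sqrt{\hbar})\times B_{P}^{n}(\sqrt{\hbar})$ together with the uniqueness of the John ellipsoid. Since $K$ is a convex body symmetric about the origin and contains $B^{2n}(\sqrt{\hbar})$ (because $|x|^{2}+|p|^{2}\leq\hbar$ implies both $|x|\leq\sqrt{\hbar}$ and $|p|\leq\sqrt{\hbar}$), the John ellipsoid $\mathcal{E}$ of $K$ exists and is unique.

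First I would identify the symmetry group of $K$. The body is invariant under the action of $G=(O(n)\times O(n))\rtimes\mathbb{Z}_{2}$, where $O(n)\times O(n)$ acts block-diagonally by $(x,p)\mapsto(Ux,Vp)$ and the $\mathbb{Z}_{2}$ factor swaps the two copies of $\mathbb{R}^{n}$ (this last symmetry is available precisely because both factors have the same radius $\sqrt{\hbar}$). By the uniqueness of the John ellipsoid, $g(\mathcal{E})=\mathcal{E}$ for every $g\in G$, so $\mathcal{E}$ is $G$-invariant.

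Next I would pin down the shape of a $G$-invariant ellipsoid. Writing $\mathcal{E}=\{z:Qz\cdot z\leq 1\}$ with $Q=Q^{T}>0$ in block form $Q=\bigl(\begin{smallmatrix}Q_{11}&Q_{12}\\Q_{12}^{T}&Q_{22}\end{smallmatrix}\bigr)$, invariance under $O(n)\times O(n)$ gives $U^{T}Q_{11}U=Q_{11}$, $V^{T}Q_{22}V=Q_{22}$ for all $U,V\in O(n)$, and $U^{T}Q_{12}V=Q_{12}$ for all $U,V$. The standard representation-theoretic argument then forces $Q_{11}=\alpha I_{n}$, $Q_{22}=\beta I_{n}$, and $Q_{12}=0$. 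The swap symmetry imposes $\alpha=\beta$, so $Q=\alpha I_{2n}$ and $\mathcal{E}$ must be a Euclidean ball $B^{2n}(r)$ centered at the origin.

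Finally, I would determine $r$. The inclusion $B^{2n}(r)\subset K$ is equivalent to $r\leq\sqrt{\hbar}$ (take $z=(r,0,\dots,0)$ on the first axis to see the bound is sharp), so the largest admissible radius is $r=\sqrt{\hbar}$, giving $\mathcal{E}=B^{2n}(\sqrt{\hbar})$, as claimed. The only delicate point in this plan is justifying that a $G$-invariant ellipsoid centered at the origin must reduce to a ball; once that representation-theoretic step is secured the rest is immediate.
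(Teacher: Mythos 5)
Your proof is correct and follows essentially the same route as the paper's: both use the symmetries of the product body together with the uniqueness of the John ellipsoid to force it to be a centered ball, and then identify the radius. You make the symmetry step precise where the paper only says ``by symmetry considerations,'' and your determination of the radius via the inscribed-ball equivalence $B^{2n}(r)\subset B_{X}^{n}(\sqrt{\hbar})\times B_{P}^{n}(\sqrt{\hbar})\Leftrightarrow r\leq\sqrt{\hbar}$ is if anything cleaner than the paper's appeal to the projections of the John ellipsoid onto $\ell_{X}$ and $\ell_{P}$.
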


\begin{proof}
By symmetry considerations the John ellipsoid of $B_{X}^{n}(\sqrt{\hbar
})\times B_{P}^{n}(\sqrt{\hbar})$ must be a centered ball $B^{2n}(R)$ whose
orthogonal projections on $\ell_{X}$ and $\ell_{P}$ must be $B_{X}^{n}%
(\sqrt{\hbar})$ and $B_{P}^{n}(\sqrt{\hbar})$, respectively This implies that
we must have $R=\sqrt{\hbar}$ (see \cite{BSM} for a slightly different proof).
\end{proof}

We denote by $\Pi_{\ell,\ell^{\prime}}$ (resp. $\Pi_{\ell^{\prime},\ell}$) the
projection onto $\ell$ (resp. $\ell^{\prime}$) along $\ell^{\prime}$ /resp.
along $\ell^{\prime}$ /resp. $\ell$). They are related to the orthogonal
projections $\Pi_{X}$ and $\Pi_{P}$ on the coordinate planes $\ell_{X}$ and
$\ell_{P}$ by%
\begin{equation}
\Pi_{\ell,\ell^{\prime}}=S\Pi_{X}S^{-1}\text{ \ and \ }\Pi_{\ell^{\prime}%
,\ell}=S\Pi_{P}S^{-1} \label{projs}%
\end{equation}
when $S\in\operatorname*{Sp}(n)$ is such that $(\ell,\ell^{\prime})=S(\ell
_{X},\ell_{P}).$

\begin{proposition}
\label{Thm2}Let $(X_{\ell},X_{\ell^{\prime}}^{\hbar})\subset(\ell,\ell
^{\prime})$ be a microlocal pair where $X_{\ell}$ and $X_{\ell^{\prime}%
}^{\hbar}$ are centered ellipsoid. The There John ellipsoid of $X_{\ell}\times
X_{\ell^{\prime}}^{\hbar}$ is a quantum $\Omega$ blob whose projections are
$\Pi_{\ell,\ell^{\prime}}\Omega=X_{\ell}$ and $\Pi_{\ell^{\prime},\ell}%
\Omega=X_{\ell^{\prime}}^{\hbar}$.
\end{proposition}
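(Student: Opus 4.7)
The plan is to reduce the statement to the symmetric Euclidean case handled in Lemma \ref{Lemma John} by two successive symplectic conjugations, exploiting the affine equivariance of the John ellipsoid (the property of being the maximum-volume inscribed ellipsoid is preserved by invertible linear maps).

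First, using transitivity of $\operatorname{Sp}(n)$ on transversal Lagrangian pairs, pick $S\in\operatorname{Sp}(n)$ with $(\ell,\ell')=S(\ell_X,\ell_P)$. By formulas (\ref{sxh1})--(\ref{sxh2}) the microlocal pair takes the form $X_\ell=S(X)$, $X_{\ell'}^\hbar=S(X^\hbar)$ with $X=\{x\in\ell_X:Ax\cdot x\le\hbar\}$ and $X^\hbar=\{p\in\ell_P:A^{-1}p\cdot p\le\hbar\}$ for some $A=A^T>0$. Since $\ell_X\oplus\ell_P=\mathbb{R}^{2n}$ and $S$ is linear, $S(X\times X^\hbar)=X_\ell\times X_{\ell'}^\hbar$, so affine equivariance of John's ellipsoid gives
\[
\operatorname{John}(X_\ell\times X_{\ell'}^\hbar)=S\bigl(\operatorname{John}(X\times X^\hbar)\bigr).
\]

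Second, set $B=A^{1/2}$ and consider the block-diagonal map $T:(x,p)\mapsto(B^{-1}x,Bp)$. Because $B^T=B$, $T$ preserves $\omega$ and hence lies in $\operatorname{Sp}(n)$; moreover $T(B_X^n(\sqrt{\hbar}))=X$ and $T(B_P^n(\sqrt{\hbar}))=X^\hbar$, so $T\bigl(B_X^n(\sqrt{\hbar})\times B_P^n(\sqrt{\hbar})\bigr)=X\times X^\hbar$. Applying Lemma \ref{Lemma John} together with affine equivariance once more yields $\operatorname{John}(X\times X^\hbar)=T(B^{2n}(\sqrt{\hbar}))$, so
\[
\Omega:=\operatorname{John}(X_\ell\times X_{\ell'}^\hbar)=(ST)\bigl(B^{2n}(\sqrt{\hbar})\bigr).
\]
Since $ST\in\operatorname{Sp}(n)$, this realises $\Omega$ as a quantum blob.

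For the projection identities I use the transformation law (\ref{projs}), $\Pi_{\ell,\ell'}=S\Pi_X S^{-1}$ and $\Pi_{\ell',\ell}=S\Pi_P S^{-1}$. The ball $B^{2n}(\sqrt{\hbar})$ projects orthogonally onto $B_X^n(\sqrt{\hbar})$, and because $T$ is block-diagonal with respect to $\ell_X\oplus\ell_P$ it commutes with $\Pi_X$; hence $\Pi_X(T(B^{2n}(\sqrt{\hbar})))=T(B_X^n(\sqrt{\hbar}))=X$. Applying $S$ and using (\ref{projs}) gives $\Pi_{\ell,\ell'}\Omega=S(X)=X_\ell$, and the argument for $\Pi_{\ell',\ell}\Omega=X_{\ell'}^\hbar$ is entirely symmetric. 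The main obstacle I anticipate is precisely the equality (as opposed to mere inclusion) in these projection formulas: the inclusion $\Pi_{\ell,\ell'}\Omega\subset X_\ell$ is automatic from $\Omega\subset X_\ell\times X_{\ell'}^\hbar$, but the reverse inclusion relies crucially on the explicit identification of $\Omega$ as the $(ST)$-image of the full Euclidean ball together with the block-diagonal form of $T$.
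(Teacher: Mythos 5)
Your proof is correct and follows essentially the same route as the paper's: reduce to the coordinate Lagrangian pair via $S$, pull $X\times X^{\hbar}$ back to $B_{X}^{n}(\sqrt{\hbar})\times B_{P}^{n}(\sqrt{\hbar})$ by a block-diagonal symplectic map, invoke Lemma \ref{Lemma John}, and transport everything back using the equivariance of the John ellipsoid and the relations (\ref{projs}). Your only departures are improvements in rigor: you make the affine equivariance of the John ellipsoid explicit, and you correctly use $M_{A^{1/2}}$ (your $T$ with $B=A^{1/2}$) where the paper writes $X=A^{-1}(B_{X}^{n}(\sqrt{\hbar}))$ and $M_{A}$ --- a harmless slip, since $\{x:Ax\cdot x\leq\hbar\}=A^{-1/2}(B_{X}^{n}(\sqrt{\hbar}))$.
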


\begin{proof}
The proof we give here considerably simplifies the one we gave in \cite{BSM}.
Consider first the case $(\ell,\ell^{\prime})=(\ell_{X},\ell_{P})$ and let
\begin{align}
X  &  =\{x:Ax\cdot x\leq\hbar\}\text{ \ }=A^{-1}(B_{X}^{n}(\sqrt{\hbar
}))\text{,}\\
\text{ \ }X^{\hbar}  &  =\{p:A^{-1}p\cdot p\leq\hbar\}=A(B_{P}^{n}(\sqrt
{\hbar}))
\end{align}
($A=A^{T}>0$).hence
\begin{equation}
X\times X^{\hbar}=M_{A}(B_{X}^{n}(\sqrt{\hbar})\times B_{P}^{n}(\sqrt{\hbar}))
\end{equation}
(recall that $M_{A}\in\operatorname*{Sp}(n)$). It follows from Lemma
\ref{Lemma John} that the John ellipsoid of $X\times X^{\hbar}$ is the quantum
blob $M_{A}(B^{2n}(\sqrt{\hbar}))$, whose orthogonal projections on $\ell_{X}$
and $\ell_{P}$ are $X$ and $X^{\hbar}$. For the general case, choose
$S\in\operatorname*{Sp}(n)$ such that $(\ell,\ell^{\prime})=S(\ell_{X}%
,\ell_{P})$; we have (formulas (\ref{sxh1})--(\ref{sxh2})) $X_{\ell}=S(X)$ and
$X_{\ell^{\prime}}^{\hbar}=S(X^{\hbar})$ hence the John ellipsoid of $X_{\ell
}\times$ $X_{\ell^{\prime}}^{\hbar}$ is $SM_{A}(B^{2n}(\sqrt{\hbar}))$ which
is again a quantum blob. We have, using the relations (\ref{projs})%
\begin{align*}
\Pi_{\ell,\ell^{\prime}}\left[  SM_{A}(B^{2n}(\sqrt{\hbar}))\right]   &
=S\Pi_{X}\left[  M_{A}(B^{2n}(\sqrt{\hbar}))\right]  =SX\\
\Pi_{\ell^{\prime},\ell}\left[  SM_{A}(B^{2n}(\sqrt{\hbar}))\right]   &
=S\Pi_{P}\left[  M_{A}(B^{2n}(\sqrt{\hbar}))\right]  =SX_{\ell^{\prime}%
}^{\hbar}%
\end{align*}
which concludes the proof.
\end{proof}

\subsection{The correspondence between $\psi_{X,Y}$ and microlocal pairs}

\subsection{Microlocal pairs and symplectic capacities}

Here is a very

\begin{proposition}
Let $(X_{\ell}\times X_{\ell^{\prime}}^{\hbar})=S(X\times X^{\hbar})$,
$S\in\operatorname*{Sp}(n)$, be a pure quasi state. We have%
\begin{equation}
c_{\max}(X_{\ell}\times X_{\ell^{\prime}}^{\hbar})=4\hbar. \label{clh}%
\end{equation}
For a general quasi state \ $(X_{\ell}\times P_{\ell^{\prime}})$,
$X_{\ell^{\prime}}^{\hbar}\subset P_{\ell^{\prime}}$ we have
\begin{equation}
c_{\max}(X\times P)=4\lambda_{\max}\hbar\label{yaron1}%
\end{equation}
where $\lambda_{\max}\geq1$ is the number
\begin{equation}
\lambda_{\max}=\max\{\lambda>0:\lambda X^{\hbar}\subset P\}.
\end{equation}

\end{proposition}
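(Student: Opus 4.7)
The plan is to reduce both formulas to standard position via the symplectic invariance of $c_{\max}$, treat the pure state case by a polar-duality computation in coordinate Lagrangian planes, and then derive the general formula from the pure state formula by combining monotonicity with a symplectic rescaling of the Planck parameter. Since $c_{\max}$ is a symplectic capacity and $(X_{\ell},X_{\ell^{\prime}}^{\hbar})=S(X,X^{\hbar})$ with $S\in\operatorname*{Sp}(n)$, invariance gives $c_{\max}(X_{\ell}\times X_{\ell^{\prime}}^{\hbar})=c_{\max}(X\times X^{\hbar})$, and the same reduction applies to $X_{\ell}\times P_{\ell^{\prime}}$. We may therefore assume $(\ell,\ell^{\prime})=(\ell_{X},\ell_{P})$ throughout.

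For the pure state, writing the ellipsoid in the normal form $X=\{x:Ax\cdot x\leq\hbar\}$ with $A=A^{T}>0$, formulas (\ref{exc1})--(\ref{exc2}) give $X=A^{-1/2}(B_{X}^{n}(\sqrt{\hbar}))$ and $X^{\hbar}=A^{1/2}(B_{P}^{n}(\sqrt{\hbar}))$, whence
\[
X\times X^{\hbar}=M_{A^{1/2}}(B_{X}^{n}(\sqrt{\hbar})\times B_{P}^{n}(\sqrt{\hbar}))
\]
with $M_{A^{1/2}}\in\operatorname*{Sp}(n)$. A further application of symplectic invariance reduces (\ref{clh}) to the identity $c_{\max}(B_{X}^{n}(\sqrt{\hbar})\times B_{P}^{n}(\sqrt{\hbar}))=4\hbar$, which is the Lagrangian-product formula $c_{\max}(K\times K^{\hbar})=4\hbar$ for centrally symmetric convex bodies paired with their polar dual (Artstein-Avidan--Karasev--Ostrover), applied here to $K=B_{X}^{n}(\sqrt{\hbar})$ (for which $K^{\hbar}=B_{P}^{n}(\sqrt{\hbar})$).

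For a general quasi-state, monotonicity of $c_{\max}$ combined with the inclusion $X\times\lambda_{\max}X^{\hbar}\subset X\times P$ gives $c_{\max}(X\times P)\geq c_{\max}(X\times\lambda_{\max}X^{\hbar})$. The symplectic rescaling $M_{\sqrt{\lambda_{\max}}\,I}\in\operatorname*{Sp}(n)$ maps $X\times\lambda_{\max}X^{\hbar}$ onto $Y\times Y^{\lambda_{\max}\hbar}$ with $Y=\lambda_{\max}^{-1/2}X$, which is a pure state at the rescaled Planck parameter $\lambda_{\max}\hbar$; by the previous step its capacity equals $4\lambda_{\max}\hbar$, providing the lower bound. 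The main obstacle is the matching upper bound $c_{\max}(X\times P)\leq 4\lambda_{\max}\hbar$: even though $P$ may extend beyond $\lambda_{\max}X^{\hbar}$ in directions transverse to the tangency $\partial P\cap\partial(\lambda_{\max}X^{\hbar})$, one must produce a symplectic embedding of $X\times P$ into a cylinder of capacity $4\lambda_{\max}\hbar$. The idea is to anchor a Gromov non-squeezing argument at a tangency point: the extremality of $\lambda_{\max}$ constrains the growth of the support function of $P$ in that direction, which should suffice to bound the two-dimensional symplectic projection of $X\times P$ by $4\lambda_{\max}\hbar$.
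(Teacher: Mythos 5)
Your treatment of the pure-state formula (\ref{clh}) is correct and is essentially the route the paper takes: reduce to $(\ell_{X},\ell_{P})$ by symplectic invariance, observe that $X\times X^{\hbar}$ is a symplectic image of $B_{X}^{n}(\sqrt{\hbar})\times B_{P}^{n}(\sqrt{\hbar})$ (via $M_{A^{1/2}}$ in your version, directly via the Artstein-Avidan--Karasev--Ostrover identity $c_{\max}(K\times K^{1})=4$ plus conformality in the paper's version), and rescale in $\hbar$. Your lower bound for the general quasi-state is also sound: $\lambda_{\max}X^{\hbar}=X^{\lambda_{\max}\hbar}$, so $X\times\lambda_{\max}X^{\hbar}$ is a pure pair at the rescaled parameter and monotonicity gives $c_{\max}(X\times P)\geq4\lambda_{\max}\hbar$.

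The genuine gap is the upper bound $c_{\max}(X\times P)\leq4\lambda_{\max}\hbar$. Since $c_{\max}$ is the cylindrical capacity, an upper bound requires exhibiting a symplectic embedding of $X\times P$ into a cylinder $Z_{j}^{2n}(R)$ with $\pi R^{2}=4\lambda_{\max}\hbar$, and your proposed mechanism does not produce one: bounding a two-dimensional symplectic projection of $X\times P$ controls capacities from \emph{below} (that is the content of non-squeezing), not from above, so ``anchoring a Gromov argument at a tangency point'' cannot close the estimate. Moreover, extremality of $\lambda_{\max}$ only constrains $P$ at the contact set $\partial P\cap\lambda_{\max}\partial X^{\hbar}$; elsewhere $P$ may be much larger than $\lambda_{\max}X^{\hbar}$, and monotonicity then works against you. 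The paper does not attempt this either: it invokes the full theorem $c_{\max}(X\times P)=4\lambda_{\max}$ of Artstein-Avidan, Karasev and Ostrover \cite{Artstein} (and Prop.~3 of \cite{ACHAPOLAR} for the ellipsoid case) and merely performs the $\hbar$-rescaling. That cited equality is the substantive input here; it rests on a nontrivial analysis of closed characteristics on the boundary of a convex Lagrangian product and cannot be replaced by the heuristic you sketch. You should either cite it as the paper does or supply an actual embedding argument for the upper bound.
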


\begin{proof}
For a detailed proof when $X$ is an ellipsoid see Prop. 3 in \cite{ACHAPOLAR}%
). In the general case one has to use results Ii \cite{Artstein}
Artstein-Avidan \textit{et al}. show that for $\hbar=1$ we have $c_{\max
}(X\times X^{1})=4$. Using the obvious relation $X=\hbar X^{1}$ we have, by
the conformality property of symplectic capacities,%
\begin{align*}
c_{\max}(X\times X^{\hbar})  &  =c_{\max}(\hbar^{1/2}(\hbar^{-1/2}X\times
\hbar^{1/2}X^{1}))\\
&  =\hbar c_{\max}(\hbar^{-1/2}X\times\hbar^{1/2}X^{1})\\
&  =\hbar c_{\max}(X\times X^{\hbar})
\end{align*}
the last equality because $\hbar^{-1/2}X\times\hbar^{1/2}X^{1}=M_{\hbar^{1/2}%
}(X,X^{1})$ with $M_{\hbar^{1/2}}\in\operatorname*{Sp}(n)$; hence $c_{\max
}(X\times X^{\hbar})=$ $4\hbar$. Formula (\ref{clh}) follows since by the
symplectic invariance of symplectic capacities we have
\[
c_{\max}(X_{\ell}\times X_{\ell^{\prime}}^{\hbar})=c_{\max}((X\times X^{\hbar
}))=c_{\max}(X\times X^{\hbar}).
\]
A similar argument using the formula $c_{\max}(X\times P)=4\lambda_{\max}$ in
\cite{Artstein} leads to (\ref{yaron1}).
\end{proof}

Let us denote by $\operatorname{Micro}(n)$ the set of all microlocal pairs
\ $X_{\ell}\times X_{\ell^{\prime}}^{\hbar}$ and their translates
$T(z_{0})(X_{\ell}\times X_{\ell^{\prime}}^{\hbar})=X_{\ell}\times
X_{\ell^{\prime}}^{\hbar}+z_{0}$.

\section*{APPENDIX\ A: The groups $\operatorname*{Sp}(n)$ and
$\operatorname*{Mp}(n)$}

The matrix $J=%
\begin{pmatrix}
0 & I\\
-I & 0
\end{pmatrix}
$ ($0$ and $I$ the $n\times n$ zero and identity matrices) defines the
standard symplectic form on the phase space $\mathbb{R}_{x}^{2n}$ via the
formula $\sigma(z,z^{\prime})=Jz\cdot z^{\prime}=p\cdot x^{\prime}-p^{\prime
}\cdot x$:%
\[
(z,z^{\prime})=Jz\cdot z^{\prime}=(z^{\prime})^{T}Jz.
\]
The standard symplectic group is denoted by $\operatorname*{Sp}(n)$: it is the
multiplicative group of all real $2n\times2n$ matrices $S$ such that
$\sigma(Sz,Sz^{\prime})=\sigma(z,z^{\prime})$ for all $z,z^{\prime}$;
equivalently:%
\[
S\in\operatorname*{Sp}(n)\Longleftrightarrow SJS^{T}=S^{T}JS=J.
\]
The symplectic group is a connected classical Lie group generated by the
matrices $J$ and the matrices%
\[
M_{L}=%
\begin{pmatrix}
L^{-1} & 0\\
0 & L^{T}%
\end{pmatrix}
\text{ \ , \ }V_{-P}=%
\begin{pmatrix}
I^{-1} & 0\\
P & I
\end{pmatrix}
\]
with $\det L\neq0$, $P=P^{T}$. The symplectic group has a double covering
faithfully represented by a group of unitary operators acting on
$L^{2}(\mathbb{R}^{n})$, the metaplectic group $\operatorname*{Mp}(n).$ It is
generated by the operators $\widehat{J},\widehat{V}_{P},\widehat{M}_{L.m}$
described in the table below, together with their projections $\pi
^{\operatorname*{Mp}}:$ $\operatorname*{Mp}(n)\longrightarrow
\operatorname*{Sp}(n)$

\begin{center}%
\begin{tabular}
[c]{|l|l|l|}\hline
$\widehat{J}\psi(x)=\left(  \tfrac{1}{2\pi i\hbar}\right)  ^{n/2}\int
e^{-\frac{1}{\hbar}x\cdot x^{\prime}}\psi(x^{\prime})d^{n}x^{\prime}$ &
$\overset{\pi^{\operatorname*{Mp}}}{\longrightarrow}$ & $J=%
\begin{pmatrix}
0 & I\\
-I & 0
\end{pmatrix}
$\\\hline
$\widehat{V}_{P}\psi(x)=e^{-\frac{i}{2\hbar}Px\cdot x}\psi(x)$ &
$\overset{\pi^{\operatorname*{Mp}}}{\longrightarrow}$ & $V_{P}=%
\begin{pmatrix}
I & 0\\
-P & I
\end{pmatrix}
$\\\hline
$\widehat{M}_{L.m}\psi(x)=i^{m}\sqrt{|\det L|}\psi(Lx)$ & $\overset{\pi
^{\operatorname*{Mp}}}{\longrightarrow}$ & $M_{L}=%
\begin{pmatrix}
L^{-1} & 0\\
0 & L^{T}%
\end{pmatrix}
$\\\hline
\end{tabular}

\end{center}

(in the last row, the integer $m$ corresponds to a choice of $\arg\det L$).

\section*{APPENDIX\ B: Symplectic Capacities}

n \textit{intrinsic} symplectic capacity assigns a non-negative number (or
$+\infty$) $c(\Omega)$ to every subset $\Omega$ of phase space $\mathbb{R}%
^{2n}$; this assignment is subjected to the following properties:

\begin{itemize}
\item \textbf{Monotonicity:} If $\Omega\subset\Omega^{\prime}$ then
$c(\Omega)\leq c(\Omega^{\prime})$;

\item \textbf{Symplectic invariance:} If $f$ is a symplectomorphism (linear,
or not) then $c(f(\Omega))=c(\Omega)$;

\item \textbf{Conformality:} If $\lambda$ is a real number then $c(\lambda
\Omega)=\lambda^{2}c(\Omega)$;

\item \textbf{Normalization:} We have
\[
c(B^{2n}(R))=\pi R^{2}=c(Z_{j}^{2n}(R));
\]

\end{itemize}

Let $c$ be a symplectic capacity on the phase plane $\mathbb{R}^{2}$. We have
$c(\Omega)=\operatorname*{Area}(\Omega)$ when $\Omega$ is a connected and
simply connected surface. In the general case there exist infinitely many
intrinsic symplectic capacities, but they all agree on phase space ellipsoids
as we will see below. The smallest symplectic capacity is denoted by $c_{\min
}$ (\textquotedblleft Gromov width\textquotedblright): by definition $c_{\min
}(\Omega)$ is the supremum of all numbers $\pi R^{2}$ such that there exists a
canonical transformation such that $f(B^{2n}(R))\subset\Omega$. The fact that
$c_{\min}$ really is a symplectic capacity follows from a deep and difficult
topological result, Gromov's \cite{gr85} symplectic non-squeezing theorem,
alias the principle of the symplectic camel. (For a discussion of Gromov's
theorem from the point of view of Physics see de Gosson \cite{go09}, de Gosson
and Luef \cite{goluPR}.) Another useful example is provided by the
Hofer--Zehnder \cite{HZ} capacity $c^{\mathrm{HZ}}$. It has the property that
it is given by the integral of the action form $pdx=p_{1}dx_{1}+\cdot
\cdot\cdot+p_{n}dx_{n}$ along a certain curve:%
\begin{equation}
c^{\text{HZ}}(\Omega)=\oint\nolimits_{\gamma_{\min}}pdx \label{chz}%
\end{equation}
when $\Omega$ is a compact convex set in phase space; here $\gamma_{\min}$ is
the shortest (positively oriented) Hamiltonian periodic orbit carried by the
boundary $\partial\Omega$ of $\Omega$. This formula agrees with the usual
notion of area in the case $n=1$.

It turns out that all intrinsic symplectic capacities agree on phase space
ellipsoids, and are calculated as follows (see e.g. \cite{Birk,goluPR,HZ}).
Let $M$ be a $2n\times2n$ positive-definite matrix $M$ and consider the
ellipsoid:%
\begin{equation}
\Omega_{M,z_{0}}:M(z-z_{0})^{2}\leq1. \label{ellipsoid}%
\end{equation}
Then, for every intrinsic symplectic capacity $c$ we have
\begin{equation}
c(\Omega_{M,z_{0}})=\pi/\lambda_{\max}^{\sigma} \label{capellipse}%
\end{equation}
where $\lambda_{\max}^{\sigma}=$ is the largest symplectic eigenvalue of $M$.
The symplectic eigenvalues of a positive definite matrix are defined as
follows: the matrix $JM$ ($J$ the standard symplectic matrix) is equivalent to
the antisymmetric matrix $M^{1/2}JM^{1/2}$ hence its $2n$ eigenvalues are of
the type $\pm i\lambda_{1}^{\sigma},..,$ $\pm i\lambda_{n}^{\sigma}$ where
$\lambda_{j}^{\sigma}>0$. The positive numbers $\lambda_{1}^{\sigma}\geq,..,$
$\geq\lambda_{n}^{\sigma}$ are called the \emph{symplectic eigenvalues} of the
matrix $M$.

The definition of an extrinsic symplectic capacity is similar to that of an
intrinsic capacity, replacing the normalization condition with a weaker one:

\begin{itemize}
\item \textbf{Nontriviality:} $c(B^{2n}(R))<+\infty$ and $c(Z_{j}%
^{2n}(R))<+\infty$.
\end{itemize}

In \cite{EH} Ekeland and Hofer defined a sequence $c_{1}^{\mathrm{EH}}$,
$c_{2}^{\mathrm{EH}},...,c_{k}^{\mathrm{EH}},...$ of extrinsic symplectic
capacities having the monotonicity properties%
\begin{equation}
c_{k}^{\mathrm{EH}}(B^{2n}(R))=\left[  \frac{k+n-1}{n}\right]  \pi
R^{2}\ \ \text{,}\ \ c_{k}^{\mathrm{EH}}(Z_{j}^{2n}(R))=k\pi R^{2}. \label{eh}%
\end{equation}
Of course $c_{1}^{\mathrm{EH}}$ is an intrinsic capacity; in fact it coincides
with the Hofer--Zehnder capacity on bounded and convex sets $\Omega$. We have%
\begin{equation}
c_{1}^{\text{EH}}(\Omega)\leq c_{2}^{\text{EH}}(\Omega)\leq\cdot\cdot\cdot\leq
c_{k}^{\text{EH}}(\Omega)\leq\cdot\cdot\cdot\tag{A8}%
\end{equation}
The Ekeland--Hofer capacities have the property that for each $k$ there exists
an integer $N\geq0$ and a closed characteristic $\gamma$ of $\partial\Omega$
such that%
\begin{equation}
c_{k}^{\text{EH}}(\Omega)=N\left\vert \oint\nolimits_{\gamma}pdx\right\vert
\label{acspec}%
\end{equation}
(in other words, $c_{k}^{\text{EH}}(\Omega)$ is a value of the \textit{action
spectrum} of $\partial\Omega$); this formula shows that $c_{k}^{\text{EH}%
}(\Omega)$ is solely determined by the boundary of $\Omega$; therefore the
notation $c_{k}^{\text{EH}}(\partial\Omega)$ is sometimes used in the
literature. The Ekeland--Hofer capacities $c_{k}^{\text{EH}}$ allow us to
classify phase-space ellipsoids. In fact, the non-decreasing sequence of
numbers $c_{k}^{\text{EH}}(\Omega_{M})$ is determined as follows for an
ellipsoid $\Omega:Mz\cdot z\leq1$ ($M$ symmetric and positive-definite): let
\ $(\lambda_{1}^{\sigma},...,\lambda_{n}^{\sigma})$ be the symplectic
eigenvalues of $M$; then
\begin{equation}
\{c_{k}^{\text{EH}}(\Omega):k=1,2,...\}=\{N\pi\lambda_{j}^{\sigma
}:j=1,...,n;N=0,1,2,...\}. \label{cehc}%
\end{equation}
Equivalently, the increasing sequence $c_{1}^{\text{EH}}(\Omega)\leq
c_{2}^{\text{EH}}(\Omega)\leq\cdot\cdot\cdot$ is obtained by writing the
numbers $N\pi\lambda_{j}^{\sigma}$ in increasing order with repetitions if a
number occurs more than once.

\begin{acknowledgement}
This work has been financed by the Austrian Research Foundation FWF (Grant
number PAT 2056623).
\end{acknowledgement}

\textbf{Keywords}: Gewneralized coherent states, quantum blobs, uncertainty
principle; symplectic geometry

\textbf{DATA\ AVAILABILITY\ STATEMENT}: no data has been used created, other
that the LaTex source file

\textbf{CONFLICT\ OF\ INTERESTS}: there are no conflict of interests

\end{document}